\def\bX{\mathbf{X}}
\def\bx{\mathbf{x}}
\def\bXt{\mathbf{X}^\mathrm{T}}
\def\bXz{\mathbf{X}_0}
\def\bXtz{\mathbf{X}_0^\mathrm{T}}
\def\bxz{\mathbf{x}_0}
\def\bxtz{\mathbf{x}_0^\mathrm{T}}
\def\bP{\mathbf{P}}
\def\bPz{\mathbf{P}_0}
\def\bp{\mathbf{p}}
\def\bpz{\mathbf{p}_0}
\def\bptz{\mathbf{p}_0^\mathrm{T}}
\def\bPt{\mathbf{P}^\mathrm{T}}
\def\bPtz{\mathbf{P}_0^\mathrm{T}}
\def\bpt{\mathbf{p}^\mathrm{T}}
\def\ra{\mathrm{a}}
\def\rb{\mathrm{b}}
\def\rc{\mathrm{c}}
\def\rd{\mathrm{d}}
\def\re{\mathrm{e}}
\def\rf{\mathrm{f}}
\def\rn{\mathrm{n}}
\def\rr{\mathrm{r}}
\def\rlm{\mathrm{m}}
\def\rA{\mathrm{A}}
\def\rB{\mathrm{B}}
\def\rC{\mathrm{C}}
\def\rE{\mathrm{E}}
\def\rR{\mathrm{R}}
\def\rS{\mathrm{S}}
\def\rC{\mathrm{C}}
\def\rE{\mathrm{E}}
\def\rS{\mathrm{S}}
\def\rT{\mathrm{T}}
\def\tr{\mathrm{Tr}}
\def\bA{\mathbf{A}}
\def\bB{\mathbf{B}}
\def\bC{\mathbf{C}}
\def\bH{\mathbf{H}}
 \def\basr{\mathbf{a}(\nur)}
\def\blf{\mathbf{f}}
\def\bfdot{\dot{\mathbf{f}}}
\def\rsum{\sum_{\rr=1}^{\infty}}
\def\bnu{\boldsymbol{\nu}}
\def\bxi{\boldsymbol{\xi}}
\def\bG{\mathbf{G}}
\def\bGt{\mathbf{G}^\mathrm{T}}
\def\bGdot{\dot{\mathbf{G}}}
\def\bGhat{\hat{\mathbf{G}}}
\def\bGhatnur{\hat{\mathbf{G}}(\nur)}
\def\bfdot{\dot{\mathbf{f}}}
\def\bfhat{\hat{\mathbf{f}}}
\def\bfhatnur{\hat{\mathbf{f}}(\nur)}
\def\blm{\mathbf{m}}
\def\bmtomo{\mathbf{m}^{-1}}
\def\bmtomoot{\mathbf{m}^{-\frac{1}{2}}}
\def\bM{\mathbf{M}}
\def\bg{\mathbf{g}}
\def\bgt{\mathbf{g}^\mathrm{T}}
\def\bC{\mathbf{C}}
\def\bc{\mathbf{c}}
\def\bctomo{\mathbf{c}^{-1}}
\def\bgamma{\boldsymbol{\gamma}}
\def\bgammadot{\dot{\boldsymbol{\gamma}}}
\def\bgammahatnur{\hat{\boldsymbol{\gamma}}(\nur)}
\def\bomega{\boldsymbol{\omega}}
\def\nur{\nu_\mathrm{r}}
\def\del{\partial}
\def\lbr{\left\langle}
\def\rbr{\right \rangle_\rC}
\def\sinomega{\sin(\boldsymbol{\omega}}
\def\cosomega{\cos(\boldsymbol{\omega}}
\newcommand{\bra}[1]{\left\langle #1 \right|}
\newcommand{\ket}[1]{\left| #1 \right\rangle}
\newtheorem{theorem}{Theorem}
\numberwithin{equation}{section}
\begin{document}

\title{The equilibrium states of open quantum systems in the strong coupling regime}
\author{Y. Suba\c{s}\i}
\email{ysubasi@umd.edu}
\affiliation{Joint Quantum Institute and Maryland Center for Fundamental Physics, University of Maryland, College Park, Maryland 20742}
\author{C. H. Fleming}
\email{hfleming@physics.umd.edu}
\affiliation{Joint Quantum Institute and Maryland Center for Fundamental Physics, University of Maryland, College Park, Maryland 20742}
\author{J. M. Taylor}
\email{jacob.taylor@nist.gov}
\affiliation{Joint Quantum Institute and NIST Gathersburg, Maryland}
\author{B. L. Hu}
\email{blhu@physics.umd.edu}
\affiliation{Joint Quantum Institute and Maryland Center for Fundamental Physics, University of Maryland, College Park, Maryland 20742}
\date{\today}

\begin{abstract}
In this work we investigate the late-time stationary states of open quantum systems coupled to a thermal reservoir in the strong coupling regime. In general such systems do not necessarily relax to a Boltzmann distribution if the coupling to the thermal reservoir is non-vanishing or equivalently if the relaxation timescales are finite. Using a variety of non-equilibrium formalisms valid for non-Markovian processes, we show that starting from a product state of the closed system = system + environment, with the environment in its thermal state, 
the open system which results from coarse graining the environment will evolve towards an equilibrium state at late-times. This state can be expressed as the reduced state of the closed system thermal state at the temperature of the environment. 
For a linear (harmonic) system and environment, which is exactly solvable, we are able to show in a rigorous way that all multi-time correlations of the open system evolve towards those of the closed system thermal state. Multi-time correlations are especially relevant in the non-Markovian regime, since they cannot be generated by the dynamics of the single-time correlations. For more general systems, which cannot be exactly solved, we are able to provide a general proof that all single-time correlations of the open system evolve to those of the closed system thermal state, to first order in the relaxation rates.
For the special case of a zero-temperature reservoir,
we are able to explicitly construct the reduced closed system thermal state in terms of the environmental correlations.
\end{abstract}

\maketitle

\section{Introduction}
Equilibrium states are typically discussed and derived in one of three settings or scenarios.
In the more-common equilibrium (Gibbs) perspective, originally based upon classical
ensemble theory, the entire system consisting of a system of interest plus its environment is taken to have some well-defined
state or set of states,
and upon coarse graining the environment, the system can appear thermal
\cite{Popescu06,Goldstein06}.
In the less-common non-equilibrium perspective, the environment is taken to be
initially thermal, whereas the open system is allowed to dynamically relax from
an arbitrary initial state into an equilibrium state
\cite{vanHove55,Davies74,Davies76a,Davies76}.
This approach is referred to as the Langevin paradigm \cite{CalzettaHu08}.
Both scenarios described above apply to situations where there is a clear 
distinction and separation between the system and environment degrees of freedom.
When there is no clear distinction or the separation is not physically justifiable, as in a molecular gas where each particle is identical, a very different set of physical variables and different kind of coarse-graining measure need be considered. One can examine the behavior of the n-particle distribution functions and perform the coarse-graining (e.g., 'slaving' in \cite{CalzettaHu08}) on  the Bogoliubov-Born-Green-Kirkwood-Yvon (BBGKY) hierarchy \cite{Balescu97}.
This approach is referred to as the Boltzmann paradigm.

The equilibrium and non-equilibrium perspectives can be made to complement each other rather
naturally within the Langevin or open system paradigm. 
In the former case, Popescu et al.~\cite{Popescu06} have shown that for an
overwhelming majority of pure states of the system + environment (within a
narrow energy interval), the reduced density matrix is very close to the
reduced density matrix corresponding to the microcanonical state of the system
+ environment (defined in the same energy interval). In their approach the
comparison is done without explicitly determining an equilibrium state. The
authors emphasize that for strong coupling, the equilibrium state is not of
Boltzmann type, and yet their results are valid in this domain. It is important
to note that dynamics does not play any role in their derivation; the entire
argument is based on \textit{kinematics}. The beauty of this approach is that one can explain the abundance of thermal-like states without referring to ensembles or time averages.

Linden \emph{et al} \cite{Linden09} expands upon the approach of \cite{Popescu06,Goldstein06} to demonstrate \textit{dynamical} relaxation\footnote{See Sec.~\ref{sec:terminology} for the definition of the terms relaxation, equilibration and thermalization as used in this work. There we also describe the meaning of the term \emph{equilibration} as used in Refs.~\cite{Linden09,Reimann10,Short11,Short12}, which differ from our definition.} under very weak assumptions.
Specifically,  they proved that any subsystem of a much larger quantum system will evolve to an approximately steady state. On the other hand Reimann \cite{Reimann10} showed that the expectation value of any ``realistic'' quantum observable will relax to an approximately constant value. (\cite{Short11} gave a clear analysis and unification of these two results.) Finally \cite{Short12} proves relaxation over a finite amount of time both in the sense of \cite{Linden09} and \cite{Reimann10}.

Dynamical relaxation of an open quantum system has been studied in the limit of vanishing coupling to the environment in \cite{vanHove55,Davies74,Davies76a,Davies76}.
In this limit the equilibrium state is shown to be of Boltzmann form in which case the result is called thermalization, rather than just relaxation. 
In our work reported here, we derive the equilibrium state of an open system coupled to a thermal reservoir explicitly, even in the strong coupling regime. Moreover for the N oscillator quantum Brownian motion (N-QBM) model we are able to show the relaxation of multi-time correlations of the open system as well. To do so we need to restrict the environment to be in a thermal initial state. 

Another difference between our work and \cite{Linden09,Reimann10,Short11,Short12} is in the methods and emphasis. We take the open quantum systems approach \cite{Feynman63,CaldeiraLeggett83,HPZ92,Davies76,Weiss93,Breuer02} of assuming an environment (E) which the system (S) interacts with, keeping some coarse-grained information about the environment and accounting for its systematic  influences on the system in a self-consistent manner. The time evolution of the open quantum system is in general governed by non-unitary dynamics. In contradistinction,  \cite{Linden09,Reimann10,Short12}  consider the unitary time evolution of the closed system (S + E) and then trace out the environment to get the system state. Both approaches are equally valid, each providing a different perspective into the physics with different emphasis. We will provide a more detailed comparison of our results  to those in the literature in the discussion section at the end.

\subsection{Relaxation, Equilibration and Thermalization}
\label{sec:terminology}

Before we present our approach, we want to define carefully what is meant by equilibration in this paper. To begin with let us consider a system in contact with two thermal reservoirs\footnote{We call an environment a reservoir if the environment has an infinite number of degrees of freedom, and a reservoir at constant temperature, a thermal reservoir.} at different temperatures. The system relaxes into a late-time steady state, which can be described by a reduced density matrix. All expectation values of system operators will also be time-independent at late-times. Yet there will be a steady heat flux from the hot reservoir to the cold reservoir through the system. This is an example of a non-equilibrium steady state. 

In general we define steady states via time independent density matrices: $d \boldsymbol{\rho}(t)/dt=0$ and use the term \emph{relaxation} to describe the generic convergence of the reduced density matrix to a fixed but arbitrary state in the late-time limit.
If the density matrix is diagonal in the energy eigenbasis of the system we call it a stationary state. An isolated stationary state is also a steady state, but this is not true for open systems with non-vanishing coupling to their environments.

In this work we reserve the term equilibrium for systems whose multi-time correlations can be derived from the thermal state of a possibly extended closed system which is governed by Hamiltonian dynamics.
As a result of our definition, equilibration implies relaxation but the reverse is not true.
The thermal reservoir distinguishes itself
from other possible environments by the \emph{universality} of its
fluctuation-dissipation relation (FDR)\footnote{As long as the environment is modelled after a physical system, fluctuations will be related to dissipation; hence there will be a FDR. However for general environments this relation depends on the specifics of the system-environment coupling. Thermal environments are unique in that the FDR does not depend on the details of the system and the coupling to the system \cite{FDR}.}, detailed-balance conditions and  Kubo-Martin-Schwinger (KMS) relations. In the vanishing coupling limit thermal reservoirs lead to the thermalization of the system as defined below.
However for non-vanishing coupling to a thermal reservoir the equilibrium state of the system does not need to be of the Boltzmann form
\begin{align}
\label{eq:TS}
\boldsymbol{\rho}_{\mathrm{S}}(\beta)=\frac{e^{-\beta \mathbf{H}_\mathrm{S}}}{\mathrm{Tr_S}\left[ e^{-\beta \mathbf{H}_\mathrm{S}} \right]}\, .
\end{align}
The asymptotic states we derive in this paper in the strong coupling limit describe equilibration and not thermalization.

The term thermalization is reserved for the relaxation of the density matrix of a system to the Boltzmann form \eqref{eq:TS} \textit{irrespective} of the initial state of the system.
Thermalization defined in this sense can take place only if the system-environment coupling is vanishingly weak. To be specific,
one requires (1) decaying environmental correlation functions, defined in
Sec.~\ref{sec:nonlinear}, (2) an initially thermal reservoir and (3) vanishing relaxation rates\footnote{To see a simple example of a relaxation rate consider the N-QBM model of Sec.~\ref{sec:Langevin} for N=1. In the Markovian limit the damping kernel can be written as $\boldsymbol{\gamma}(t,s)= \gamma_0 M \delta(t-s)$, where $\gamma_0$ acts as the damping rate.} or, equivalently, vanishing environmental correlation functions.

\begin{figure}[h]
\includegraphics[width=0.5\textwidth]{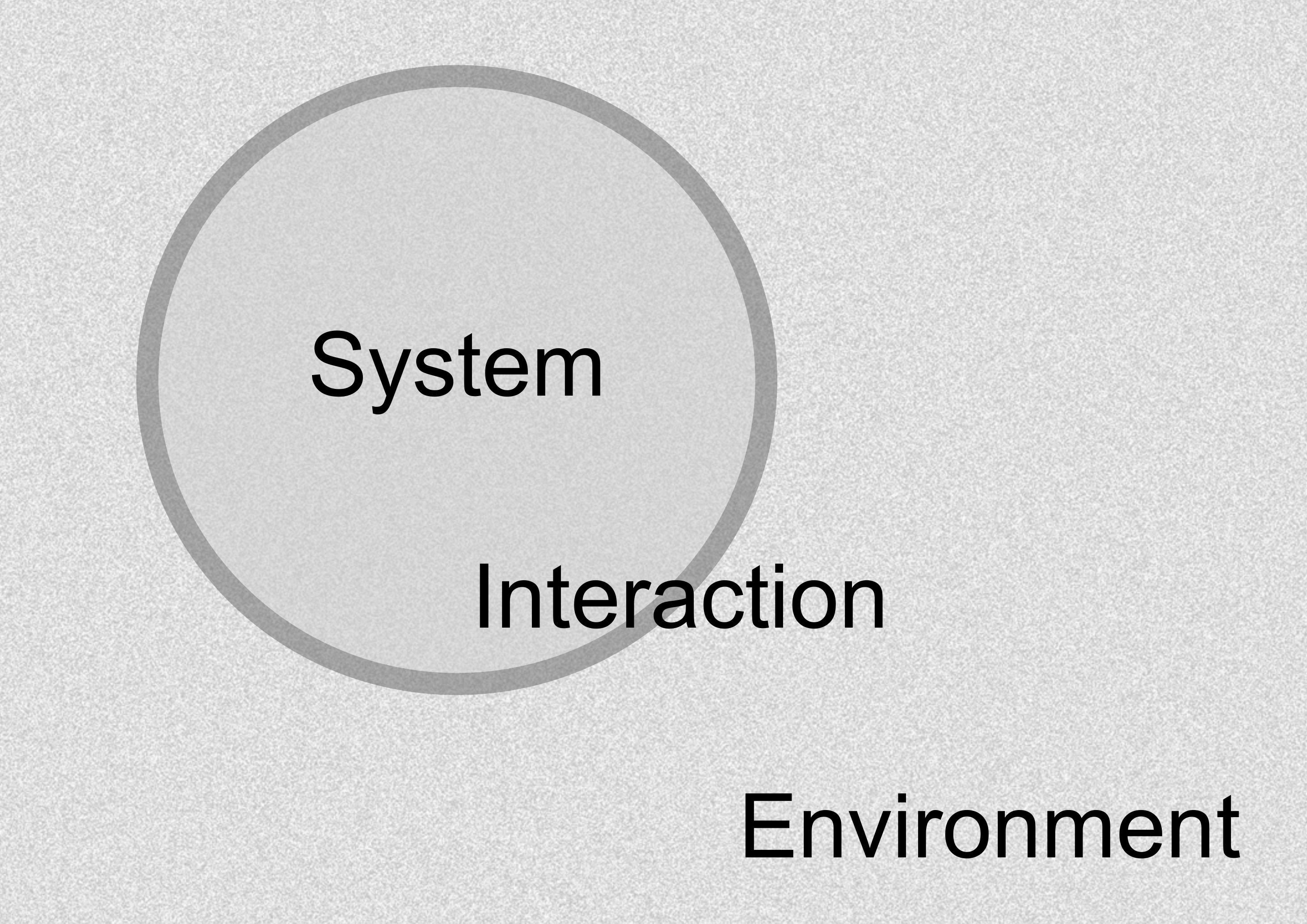}
\caption{\label{fig:boundary}
Depiction of a system embedded in its environment, with short-range interactions.
The typical argument for neglecting the interaction energy is that in the
macroscopic limit the boundary becomes immeasurable in relation to the bulk.}
\end{figure}

These conditions are customarily achieved by assuming short-range interactions and a relatively large system size, see Fig.~(\ref{fig:boundary}). However this assumption is generally not justifiable for small systems as Fig.~(\ref{fig:scale}) suggests.
In this paper, we address the stationary state of open quantum systems in contact with a thermal reservoir at temperature $T=1/\beta$, without the assumption of a vanishing interaction strength and allow for finite relaxation timescales ().
Relation \eqref{eq:TS} is known not to hold under these conditions \cite{Geva00}. Phenomenologically, one can estimate the corrections we describe by the ratio of the relaxation rates $\gamma$ to the system's energy level splittings $\Omega$, or $\gamma/\Omega$.\footnote{A well-known example is the density of states for an atom or molecule, which is
necessarily interacting with the electromagnetic field to a degree which cannot
be ignored when considering the Lamb shift, black-body radiation shifts, etc.. For optical frequencies, the emission rates of atoms are very small relative to their transition frequencies,
and so these corrections are very small.
However in other systems, such as condensates, these corrections can be of considerable size.}

As thoroughly discussed in Ref.~\cite{Geva00}, this fact is often overlooked in
many circumstances, due to the effects of ancillary approximations such as the
rotating-wave approximation, renormalization of environmentally-induced
energy-level shifts and overly-simplistic models.
As we explain in Appendix ~\ref{sec:CGN}, this fact may also be overlooked due to its absence in the case of classical, Gaussian noise.

\begin{figure}[h]
\includegraphics[width=0.5\textwidth]{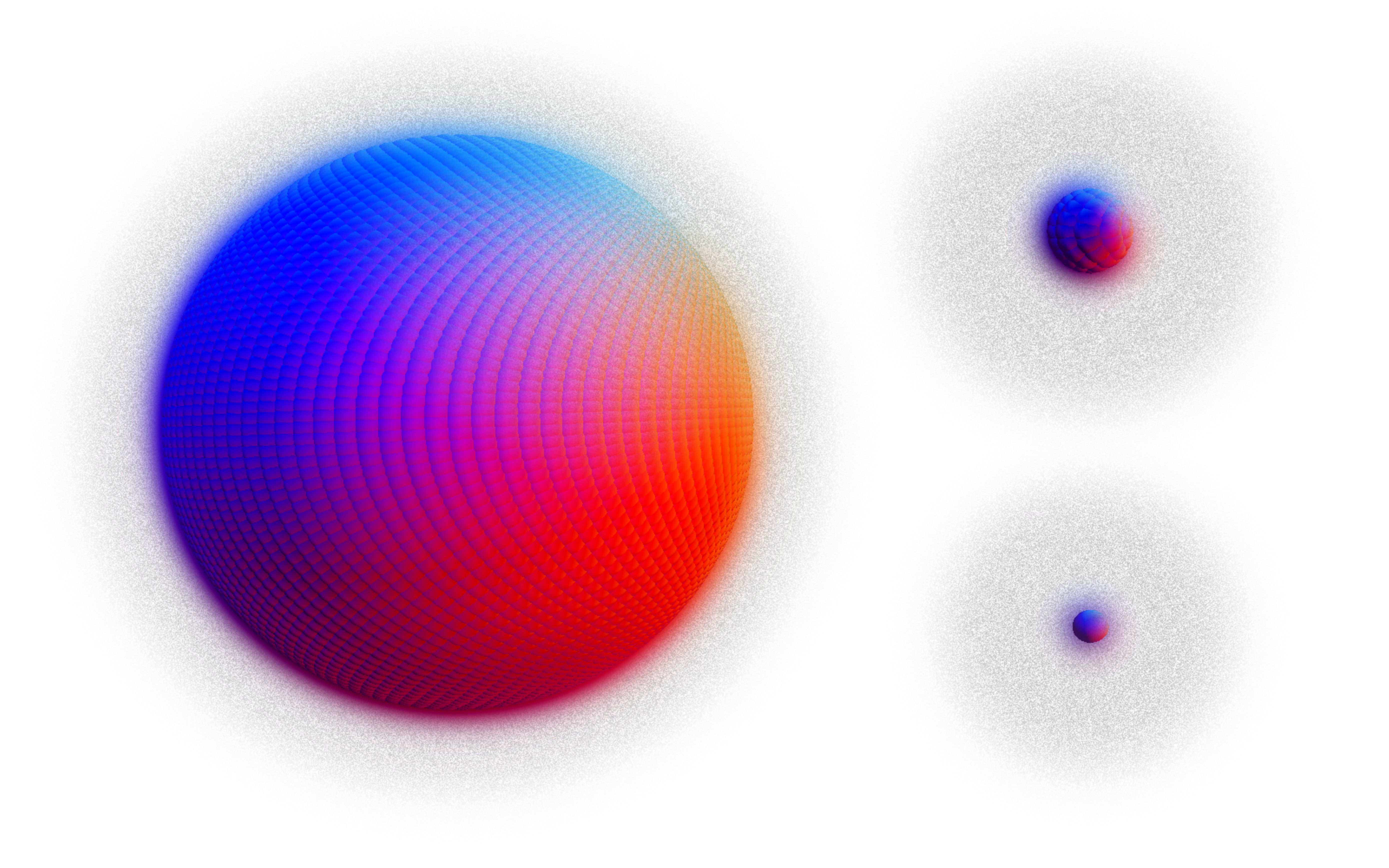}
\caption{\label{fig:scale}
Depiction of systems of decreasing particle number.
For systems consisting of a small number particles, the argument in Fig.~\ref{fig:boundary} obviously does not apply.
Furthermore, it is known that neglecting the interaction energy in these finite systems always results in infinite relaxation and thermalization times.}
\end{figure}

Finally, the term equilibrium is used in Ref.~\cite{Linden09} to describe what in our terminology are steady states and in Ref.~\cite{Reimann10} to describe what in our terminology are stationary states. Both cases have been covered in Refs.~\cite{Short11,Short12} with the single term equilibrium. These states do necessarily meet our more stringent criteria of equilibrium described above. Here we refer to the result of these works using the terminology we defined above. 

\subsection{Model and Assumptions}

We consider unitary dynamics of the closed system (C) described by the Hamiltonian $\mathbf{H}_\mathrm{C}$ consisting of the system of interest (S) and its environment (E) with interaction (I) 
\begin{align}
\label{eq:Hamiltonian}
\mathbf{H}_\mathrm{C} &= \mathbf{H}_\mathrm{S} + \mathbf{H}_\mathrm{E} + \mathbf{H}_\mathrm{I} + \mathbf{H}_\mathrm{R} \, ,
\end{align}
where $\mathbf{H}_\mathrm{R}$ contains all of the ``renormalization" (R) effects.
The interaction generates environmental correlation functions, c.f.
Eqs.~\eqref{eq:tensorInt}, \eqref{eq:alpha}),
and we assume these correlations to be decaying functions.
This assumption allows for irreversible dynamics in the open system.
Implicit in this assumption is that the environment contains a continuum of
modes (e.g. infinite volume).
This latter assumption can be satisfied by coupling the system directly
to field degrees of freedom that are uncountably infinite,
such as the electromagnetic field. Note however that we do not assume the interaction Hamiltonian to be negligible compared to the system Hamiltonian.

Finally, for mathematical simplicity we assume the initial state of the system
and environment to be uncorrelated at $t=0$\footnote{The implication of initial correlations are considered in Ref.~\cite{Tasaki07,Correlations}:
Correlated initial states are more physical, particularly in the early time evolution,
but they have essentially no bearing on the mathematical results we derive herein, which are focused upon the asymptotic time evolution.}
\begin{align}
\label{eq:product}
\boldsymbol{\rho}_\mathrm{C}(0) &= \boldsymbol{\rho}_\mathrm{S}(0) \otimes \frac{e^{-\beta \mathbf{H}_\mathrm{E}}}{Z_\mathrm{E}(\beta)} \, ,
\end{align}
where the environment (a thermal reservoir) is in its isolated equilibrium state with partition function $Z_\mathrm{E}(\beta)=\mathrm{Tr_E}[e^{-\beta \mathbf{H}_\mathrm{E}}]$, and the system (S) is in an arbitrary state.

The assumption of a thermal state for the environment can be justified, for
instance, by the approach of Popescu et al. \cite{Popescu06} in the weak-coupling limit, by giving the environment its own
environment, without any restriction on the system-environment
coupling strength.
In this sense the work of Popescu et al., and those prior, serve as a pedagogical
springboard for our analysis of strongly coupled systems.

\subsection{Results}

It is well known that in the limit of vanishing interaction strength, the open system relaxes to its thermal state \cite{vanHove55,Davies76,Geva00,QOS}
\begin{align}
\lim_{\gamma \to 0} \lim_{t \to \infty} \boldsymbol{\rho}_\mathrm{S}(t) &= \frac{e^{-\beta \mathbf{H}_\mathrm{S}}}{Z_\mathrm{S}(\beta)} \, , \label{eq:ES}
\end{align}
where $\boldsymbol{\rho}_\mathrm{S}(t) = \mathrm{Tr}_\mathrm{E} [\boldsymbol{\rho}_\mathrm{C}(t)]$ denotes the reduced density matrix and $\gamma$ a generic relaxation rate of the open system.
Note that all relaxation rates are, at minimum, second order in the interaction, being primarily determined by the two-time correlations of the environment.

In Ref.~\cite{Mori08}, it was shown to second-order in the interaction, and for a single tensor-product coupling of system and environment operators,
that the open system can be confirmed to relax to the reduced closed system thermal state
\begin{align}
\lim_{t \to \infty} \boldsymbol{\rho}_\mathrm{S}(t) &= \mathrm{Tr}_\mathrm{E}\!\left[ \frac{e^{-\beta \mathbf{H}_\mathrm{C}}}{Z_\mathrm{C}(\beta)} \right] \, . \label{eq:ER}
\end{align}
We extend this proof to general system-environment
couplings.
For zero-temperature environments we demonstrate agreement with the ground
state obtained from the time-independent Schr\"{o}dinger equation.
Moreover,
we give a non-perturbative proof of
Eq.~\eqref{eq:ER} for the exactly-solvable model of $N$-oscillator quantum
Brownian motion (N-QBM),
wherein the interacting system and environment are linear.
In that
model we are also able to rigorously prove that all multi-time correlations of the open system relax to those of the closed system thermal state with non-vanishing interaction.
Correspondence of the multi-time correlations is an important consideration as, outside of the Markovian regime, the dynamics of the multi-time correlations cannot be generated by the dynamics of the single-time correlations,
as per the quantum regression theorem (QRT) \cite{Swain81}.

\subsubsection*{The \emph{reduced}, closed system thermal state}

It is important to emphasize that Eq.~\eqref{eq:ER} pertains strictly to the open system S and not to the closed system (S +  E),
as equilibration requires not only a reservoir and late-time limit, but also a degree of coarse graining.
As we show in Sec.~\ref{sec:coarse}, if one considers any individual mode of the environment,
then its dependence upon the initial state of the system never decays.
In this sense, information pertaining to the system's past is encoded in the environment, but only when considering the state of the closed system (S + E).
However, upon coarse graining the environment by considering the time-evolution of a continuum of environment energies, and not one individual mode energy,
then all dependence upon the initial state of the system is seen to decay away in time.
In this sense, information pertaining to the system's past is only measurable for a finite span of time.

The above statement is based on the fact that, while the open system experiences irreversible dynamics: dissipation, diffusion and decoherence, the closed system (S + E) experiences reversible dynamics.
Consider, for instance, the coupling of a mixed state of the system to a zero-temperature reservoir.
Given unitary dynamics, the joint state of the system and environment cannot relax from a mixed state into a pure state (the ground state of the interacting theory).
However, the environment is exceedingly large when compared to the system, and so the system's entropy, when spread out over every mode of the environment, can become immeasurable.
This is a general phenomena of environmentally-induced irreversible dynamics: conserved quantities such as energy and entropy can flow into the environment, and owing to the overwhelmingly large number of degrees of freedom, become difficult to track or retrieve.

The paper is organized as follows: In Sec.~\ref{sec:NQBM} we derive the equilibrium state for the linear N-QBM model. In Sec.~\ref{sec:nonlinear} we extend our analysis to nonlinear systems via perturbation theory. In Sec.~\ref{sec:discussion} we summarize our results and compare them to relevant works in the literature and provide some new insights into the key issues. Some technical details have been provided and the notation is defined in the Appendices.

\section{Linear systems}
\label{sec:NQBM}
\subsection{The Lagrangian}

Our treatment of the N-QBM model is based on \cite{NQBM}. The model is that
of a continuous and linear system with finite and countable
degrees of freedom, with Lagrangian
$L_\mathrm{sys}(\mathbf{X},\dot{\mathbf{X}})$,
bilinearly coupled, via a Lagrangian $L_\mathrm{int}(\mathbf{X},\mathbf{x})$,
to a linear environment with an infinite (and possibly continuous) number of
degrees of freedom, with Lagrangian
$L_\mathrm{env}(\mathbf{x},\dot{\mathbf{x}})$.
\begin{align}
L =&\; L_\mathrm{sys}(\mathbf{X},\dot{\mathbf{X}}) +
L_\mathrm{env}(\mathbf{x},\dot{\mathbf{x}}) +
L_\mathrm{int}(\mathbf{X},\mathbf{x}) + L_\mathrm{ren}(\mathbf{X}) \, , \\
L =&\; \frac{1}{2} \left( \dot{\mathbf{X}}^\mathrm{T} \mathbf{M} \,
\dot{\mathbf{X}} - \mathbf{X}^\mathrm{T} \mathbf{C} \, \mathbf{X} \right)
+ \frac{1}{2} \left( \dot{\mathbf{x}}^\mathrm{T} \mathbf{m} \,
\dot{\mathbf{x}}  - \mathbf{x}^{\!\mathrm{T}} \mathbf{c} \, \mathbf{x}
\right) \nonumber \\
& -\mathbf{x}^{\!\mathrm{T}} \mathbf{g} \, \mathbf{X} +
L_\mathrm{ren}(\mathbf{X}) \, .
\end{align}
We assume that the spring constant matrices $\mathbf{C}, \mathbf{c}$ as well as
the mass matrices $\mathbf{M}, \mathbf{m}$ are real and positive definite, and
can be considered in general to be symmetric.
If necessary, one can relax the positivity condition and even consider
time-dependent mass matrices, spring constant matrices and system environment
coupling matrix $\mathbf{g}$ \cite{Fleming11T}.
Such a model environment can emulate any source of Gaussian noise with proper
choice of coupling.
To ensure that the free and interacting system are similar in behavior, we will
also include the ``renormalization" $L_\mathrm{ren}(\mathbf{X})$.
Our choice of ``renormalization" will be equivalent to inserting the entire
system-environment interaction in the square of the potential:
\begin{align}
L =&\; \frac{1}{2} \left( \dot{\mathbf{X}}^\mathrm{T} \mathbf{M} \,
\dot{\mathbf{X}} - \mathbf{X}^\mathrm{T} \mathbf{C} \, \mathbf{X} \right)
\nonumber \\
\label{eq:linearL}
& + \frac{1}{2} \left( \dot{\mathbf{x}}^\mathrm{T} \mathbf{m} \,
\dot{\mathbf{x}} - \left[ \mathbf{x} - \mathbf{c}^{-1} \mathbf{g} \, \mathbf{X}
\right]^\mathrm{T} \!\! \mathbf{c} \left[ \mathbf{x} - \mathbf{c}^{-1}
\mathbf{g} \, \mathbf{X} \right] \right) ,
\end{align}
since this keeps the phenomenological system-system couplings from changing.

\subsection{The Langevin Equation}
\label{sec:Langevin}

For the linear system there are several formalisms which produce the same
Langevin Equation.
The most direct is via integrating out environment degrees of freedom in the
Heisenberg equations of motion \cite{FordOconnell88} and then considering the
symmetrized moments.
Another is to consider the characteristic curves of the system + environment's
Fokker-Plank equation \cite{Fleming11T}.
Finally, one can integrate out both the environment degrees of freedom and the
relative system coordinate $\boldsymbol{\Delta} = \mathbf{X} - \mathbf{X}'$,
while leaving only the average system coordinate $\boldsymbol{\Sigma} =
(\mathbf{X} + \mathbf{X}')/2$,
in the double path integral of the reduced system propagator in the influence
functional formalism \cite{CRV03}.
In general (for nonlinear systems) there is no necessary correspondence between
these formalisms and only the first may be well defined, but here the Langevin
equation is simply
\begin{align}
\label{eq:langeq}
\mathbf{M}\, \ddot{\mathbf{X}}(t) + 2 \! \int_0^t \!\! \rd s \,
\boldsymbol{\gamma}(t,s) \, \dot{\mathbf{X}}(s) + \mathbf{C}\,
\mathbf{X}(t) &= \boldsymbol{\xi}(t) - 2 \, \boldsymbol{\gamma}(t) \, \mathbf{X}_0 \, ,
\end{align}
where $\bgamma$ is the damping kernel and $\bxi$ is the noise given by:
\begin{align}
\label{eq:dampkern}
\boldsymbol{\gamma}(t,s) &= +\mathbf{g}^\mathrm{T}
\mathbf{m}^{\!-\frac{1}{2}} \frac{
\cos\!\left(\boldsymbol{\omega}[t\!-\!s]\right) }{ 2 \,
\boldsymbol{\omega}^2 } \mathbf{m}^{\!-\frac{1}{2}} \mathbf{g} \, , \\
\label{eq:noise}
\bxi(t) &= \bgt \left( \bfdot(t) \, \blm \, \bxz + \blf(t) \, \bpz \, \right)\,
,\\
\label{eq:f(t)}
\blf(t) &= \bmtomoot \frac{\sinomega t)}{\bomega} \bmtomoot\, ,\\
\label{eq:omega}
\boldsymbol{\omega}^2 & \equiv \mathbf{m}^{\!-\frac{1}{2}} \mathbf{c} \,
\mathbf{m}^{\!-\frac{1}{2}} \, ,
\end{align}
where $\blf$ is the free Green's function of the reservoir positions and $\boldsymbol{\omega}$ is the free reservoir frequencies upon diagonalization.
Note that the damping kernel is independent of the environment's initial state, whereas the
properties of noise are determined by the environment's initial state.

We consider the case in which the system and environment are uncorrelated at $t=0$
and the environment is in its thermal state $e^{-\beta \bH_\rE}/Z_\rE(\beta)$.
The
noise has zero mean and the two time correlation is given by the noise
kernel
\begin{equation}
\bnu(t,t') = \lbr \boldsymbol{\xi}(t) \, \boldsymbol{\xi}(t')^\mathrm{T}
\right\rangle_{\boldsymbol{\xi}} \, ,
\end{equation}
where the Gaussian average over the stochastic process $\boldsymbol{\xi}$ is equivalent to tracing over the environment degrees of freedom.
The noise and damping kernels satisfy then the
fluctuation-dissipation relation
(here in the Fourier domain)
\begin{align}
\tilde{\boldsymbol{\nu}}(\omega) &= \tilde{\kappa}(\omega)  \,
\tilde{\boldsymbol{\gamma}}(\omega) \, , \label{eq:FDR} \\
\tilde{\kappa}(\omega) & \equiv \hbar \omega \, \coth\!\left( \frac{\hbar
\omega}{2 k_\mathrm{B} T} \right) \, , \label{eq:kappa}
\end{align}
with the Fourier transform defined
\begin{equation}
\tilde{f}(\omega) \equiv \int_{\!-\infty}^{+\infty} \!\!\! \rd t \, e^{-\imath
\omega t} \, f(t) \, ,
\end{equation}
and where $\tilde{\kappa}$ is the (quantum) FDR kernel.
Therefore, the problem is completely specified in terms of the damping kernel.

Given that our damping kernel is stationary, the Langevin equation can be expressed in the Laplace domain as
\begin{align}
\left[ z^2 \mathbf{M} + 2 z \hat{\boldsymbol{\gamma}}(z) + \mathbf{C} \right]
\hat{\mathbf{X}}(z) &= [ z \mathbf{M} \, \mathbf{X}_0 + \mathbf{P}_0 ] +
\hat{\boldsymbol{\xi}}(z)\, ,
\end{align}
where $\mathbf{P}=\bM\, \dot{\mathbf{X}}$ and $(\mathbf{X}_0, \mathbf{P}_0)$
correspond to the initial values at $t=0$,
and with the Laplace transform defined
\begin{equation}
\hat{f}(z) \equiv \int_0^\infty \!\! \rd t \, e^{-zt} \, f(t) \, .
\end{equation}
Formally, the solutions can be easily found by inversion:
\begin{align}
\hat{\mathbf{X}}(z) &= \hat{\mathbf{G}}(z) \, [ z \mathbf{M} \, \mathbf{X}_0 +
\mathbf{P}_0 ] + \hat{\mathbf{G}}(z) \, \hat{\boldsymbol{\xi}}(z)\, , \\
\label{eq:Ghat}
\hat{\mathbf{G}}(z) &= \left[ z^2 \mathbf{M} + 2 z \hat{\boldsymbol{\gamma}}(z)
+ \mathbf{C} \right]^{-1} \, .
\end{align}
Note that since our damping kernel is symmetric, i.e.
$\boldsymbol{\gamma}(t,s) = \boldsymbol{\gamma}(t,s)^\mathrm{T}$, the
same will be true for the propagator $\mathbf{G}(t,s)$ and its Laplace
transform.
It is also useful to consider the following representation:
\begin{align}
\hat{\mathbf{G}}(z) &= \mathbf{M}^{-\frac{1}{2}} \left[ z^2 + 2 z \,
\hat{\boldsymbol{\lambda}}(z) + \boldsymbol{\Omega}^2 \right]^{-1}
\mathbf{M}^{-\frac{1}{2}} \, , \label{eq:G(z)M} \\
\hat{\boldsymbol{\lambda}}(z) & \equiv \mathbf{M}^{-\frac{1}{2}}
\hat{\boldsymbol{\gamma}}(z) \, \mathbf{M}^{-\frac{1}{2}} \, , \\
\boldsymbol{\Omega}^2 & \equiv \mathbf{M}^{-\frac{1}{2}} \mathbf{C} \,
\mathbf{M}^{-\frac{1}{2}} \, ,
\end{align}
where the eigenvalues of $\boldsymbol{\Omega}^2$ coincide with the squared
frequencies of the normal modes of the free system.
Back in the time domain we have
\begin{align}
\label{eq:X(t)}
\mathbf{X}(t) &= \dot{\mathbf{G}}(t) \, \mathbf{M} \, \mathbf{X}_0 +
\mathbf{G}(t) \,\mathbf{P}_0 + (\mathbf{G} * \boldsymbol{\xi})(t)\, ,
\end{align}
with $*$ denoting the Laplace convolution, defined as
\begin{equation}
(A*B)(t) = \int_0^t \!\! \rd s \, A(t\!-\!s) \, B(s).
\end{equation}

For more general Gaussian states, for which the system and environment are
correlated, the noise can be correlated with $(\mathbf{X}_0, \mathbf{P}_0)$
and the noise kernel modified.
This is the case for the closed system thermal state
given by the density matrix $e^{-\beta \bH_\rC}/Z_\rC(\beta)$ which we
investigate below.

\subsection{Single-time correlations in the closed system thermal state}
\label{sec:singletimecorrelations}

In this section we calculate the single-time correlations in the closed system
thermal state of the N-QBM model. The partition function for the N-QBM
model has
been derived in App.~\ref{app:partitionfunction}, Eq.~\eqref{eq:PF}. In the
rest of the paper including the appendices we suppress the dependence of the
paritition function on $\beta$ for brevity of notation.
As a
first step we take the logarithm of the
partition function and write it as:
\begin{align}
\label{eq:logPF}
 \log Z_\rC 
&= \log Z_\rE -\frac{1}{2} \tr \log \mathbf{M}^{-1} -\frac{1}{2} \tr\log \mathbf{C} \nonumber \\
& -\sum_{r=1}^\infty \tr \log \left( \mathbf{M}^{-1}\hat{\mathbf{G}}(\nur)^{-1} \right)+\mathrm{constant}
\end{align}

We begin by making a general observation. Consider the thermal state of a
system described by a Hamiltonian where the momenta appear only in the kinetic
energy term of the form $\sum_a p_a^2/2m$. Then all correlations between
position and momentum operators vanish: $\left\langle x_a p_b
\right\rangle=0$. 
This can be seen by noting that all correlations are
time-translation-invariant in equilibrium and forming the derivatives $\frac{\rd}{\rd
t} \left\langle x_a(t) x_b(t) \right\rangle$ and $\frac{\rd}{\rd
(t-t')}\left\langle x_a(t) x_b(t') \right\rangle\big|_{t=t'}$.
This observation applies to N-QBM model.

Let angular bracket with the subscript C denote expectation values in
the closed system thermal state. Expectation values corresponding to the uncorrelated initial state are
denoted by
attaching the subscript E
to the bracket.
For the purpose of partial differentiation, the partition function is to
be regarded as a function of $\bC,\ \bM,\ \bc,\ \blm,\ \bg$ and not (explicitly) of
$\bomega$.
With a straight-forward application of Theorem~\ref{thm:difftrick}, the
reduced system correlations are given by:
\begin{align}
\label{eq:XXform}
 \lbr \mathbf{X} \, \mathbf{X}^{\rT}\rbr&=-\frac{2}{\beta}\frac{\partial
\log Z_\rC}{\partial \mathbf{C}} \, ,\\
 \lbr \mathbf{X} \, \mathbf{P}^{\rT}\rbr&= \left \langle \mathbf{P} \,
\mathbf{X}^{\rT}\right \rangle=0 \, ,\\
\label{eq:PPform}
 \lbr \mathbf{P} \, \mathbf{P}^{\rT}\rbr&=-\frac{2}{\beta}\frac{\partial
\log Z_\rC}{\partial \mathbf{M}^{-1}} \, .
\end{align}

The position-position and position-momentum correlations between system and
reservoir modes are calculated similarly:
\begin{align}
\label{eq:Xxform}
\lbr \mathbf{X} \, \mathbf{x}^{\rT}\rbr &=  \lbr \mathbf{x} \, \mathbf{X}^{\rT}\rbr^{\rT} \\
& =\frac{1}{\beta}\frac{\partial \log Z_C}{\partial \mathbf{g}^{\rT}} +\lbr \mathbf{X} \, \mathbf{X}^{\rT}\rbr \, \mathbf{g}^{\rT} \mathbf{c}^{-1} \, , \nonumber \\
\lbr \mathbf{X} \, \mathbf{p}^{\rT}\rbr&= \lbr \mathbf{p} \, \mathbf{X}^{\rT}\rbr^{\rT}=0 \, ,  \\
\lbr \mathbf{P} \, \mathbf{x}^{\rT}\rbr&=\lbr \mathbf{x} \, \mathbf{P}^{\rT}\rbr^{\rT}=0 \, .
\end{align}
To calculate the momentum-momentum correlations between system and environment
we take the time derivative of $\lbr \mathbf{X}(t)
\mathbf{p}^{\rT}(t)\rbr$ and set it to zero. Since in the closed system thermal state all
expectation values are time-independent, we know that there is in fact no
dependence on time. Using the equations of motion it is straight-forward to
show that:
\begin{align}
\label{eq:Ppform}
\lbr \mathbf{P} \, \mathbf{p}^{\rT}\rbr
&=\mathbf{M} \, \lbr \mathbf{X} \, \mathbf{x}^{\rT}\rbr \, \mathbf{c}
-\mathbf{M} \, \lbr\mathbf{X} \, \mathbf{X}^{\rT}\rbr \, \mathbf{g}^{\rT} \, .
\end{align}
The environment correlations can 
be calculated by direct
differentiation of the partition function:
\begin{align}
 \label{eq:xxform}
 \lbr \mathbf{x} \,
\mathbf{x}^{\rT}\rbr
&=-\frac{2}{\beta}\frac{\partial \log Z_C}{\partial \mathbf{c}}
+\mathbf{c}^{-1}\mathbf{g} \, \lbr\mathbf{X} \, \mathbf{X}^{\rT}\rbr \,
\mathbf{g}^{\rT}\mathbf{c}^{-1} \, ,\\
\lbr \mathbf{x} \, \mathbf{p}^{\rT}\rbr&= \lbr \mathbf{p} \,
\mathbf{x}^{\rT}\rbr=0 \, ,\\
\label{eq:ppform}
\lbr \mathbf{p} \, \mathbf{p}^{\rT}\rbr&=-\frac{2}{\beta}\frac{\partial \log
Z_C}{\partial \mathbf{m}^{-1}} \, .
\end{align}

Now we are in a position to determine all the single-time correlations of the interacting theory in the closed system thermal state.
Since the equilibrium state is stationary these single-time correlations are time-independent.
The details for some of these formulae are provided in App~\ref{app:singletime}. All the nonzero correlations are given by:
\begin{align}
\label{eq:<XX>} \lbr \mathbf{X} \, \mathbf{X}^{\rT}\rbr &=\frac{1}{\beta} \hat{\mathbf{G}}(\nu_0)+\frac{2}{\beta} \sum_{r=1}^\infty\hat{\mathbf{G}}(\nur) \, ,\\
\label{eq:<PP>} \lbr \bP \, \bPt \rbr &= \frac{1}{\beta} \left( \bM - \nu_0^2 \bM \, \bGhat(\nu_0) \, \bM \right) \\
& + \frac{2}{\beta}\rsum \left( \bM -  \nur^2 \bM \, \bGhatnur \, \bM \right) \, , \nonumber\\
\label{eq:<xX>} \lbr \mathbf{X} \, \mathbf{x}^{\rT}\rbr&=\lbr \mathbf{X} \, \mathbf{X}^{\rT}\rbr \mathbf{g}^{\rT}\mathbf{c}^{-1}
-\frac{2}{\beta}\sum_{r=1}^\infty \nur \hat{\mathbf{G}}(\nur) \, \hat{\boldsymbol{\gamma}}(\nur) \, \mathbf{g}^{-1} \, , \\
\label{eq:<pP>} \lbr \mathbf{P} \, \mathbf{p}^{\rT}\rbr &=\mathbf{M}\lbr \mathbf{X} \, \mathbf{x}^{\rT}\rbr\mathbf{c}-\mathbf{M}\lbr\mathbf{X} \mathbf{X}^{\rT}\rbr \mathbf{g}^{\rT} \, , \\
\label{eq:<pp>} \lbr \bpz \, \bptz \rbr &= \left \langle \bpz \, \bptz \right \rangle_\rE \\
&-\frac{2}{\beta}  \rsum \nur^2 \blm \, \bfhatnur \, \bg \, \bGhatnur \, \bgt \bfhatnur \, \blm \, , \nonumber \\
\label{eq:<xx>} \lbr \bxz \, \bxtz \rbr &= \left \langle \bxz \, \bxtz \right \rangle_\rE + \bctomo \bg \lbr \bXz \, \bX \rbr \bgt \bctomo \\
\nonumber &-\frac{2}{\beta} \rsum  \nur^2 \left( \bctomo \blm \, \bfhatnur \, \bg \, \bGhatnur \, \bgt \bfhatnur \right) \\
\nonumber &-\frac{2}{\beta} \rsum \nur^2 \left( \bfhatnur \, \bg \,  \bGhatnur \, \bgt \bfhatnur \, \blm \, \bctomo \right) \\
\nonumber &-\frac{2}{\beta} \rsum \nur^4  \left( \bctomo \blm \, \bfhatnur \, \bg \, \bGhatnur \, \bgt \bfhatnur \, \blm \, \bctomo \right)  ,
\end{align}
where $\bfhat$ is the Laplace transform of the free reservoir propagator given by Eq.~\eqref{eq:f(t)}
and $\nur = 2\pi \mathrm{r} / \hbar \beta$ are the Matsubara frequencies.

\subsection{Equivalence of single-time correlations for the open system}
\label{sec:singletimeequivalence}

In this subsection we show that the single-time correlations of
system variables for the uncorrelated initial state are asymptotically
identical to the
single-time correlations corresponding to the closed system thermal state. We
start
by calculating the variances for the closed system thermal state.
The requirement that $\bG(t)$ is a decaying function means that the Laplace
transform $\bGhat(z)$ is analytic in the right half-plane. Hence
$\bGhat(-\imath \omega)$ is analytic in the upper-half plane. On the other
hand $\coth(\beta \hbar \omega/2)$ has simple poles on the imaginary axis at
the Matzubara
frequencies $\nur$. The summations over $r$ in Eq.~\eqref{eq:<XX>} can be
written as a contour integral using Cauchy's theorem:
\begin{align}
 \lbr \bX \, \bXt \rbr &= \frac{\beta \hbar/2}{2 \pi \imath}\times
\frac{2}{\beta} \int_C \rd z \coth(\beta \hbar z/2) \, \bGhat(-\imath z) \, .
\end{align}
The contour of integration is chosen to encircle the upper-half plane in a
counter-clockwise direction. The poles on the imaginary axis at Matzubara
frequencies $\nur$ for
$r\geq 1$ are encircled, but only half of the pole at the origin is enclosed.
The arc of the contour does not contribute to the integral when the radius is
taken to infinity. Hence we can write this expression as an integral on the
real line.
Furthermore, by the symmetry of the integrand, the real part vanishes and the integral is given by:
\begin{align}
\label{eq:<X^2>}
  \lbr \bX \, \bXt \rbr &= \frac{\hbar}{2 \pi} \int_{\!-\infty}^{+\infty} \!\!\! \rd \omega \coth(\beta \hbar \omega/2) \, \mathrm{Im}\!\left[ \bGhat(-\imath \omega) \right] .
\end{align}
A similar argument can be used to derive:
\begin{align}
\label{eq:<P^2>}
 \lbr \bP \, \bPt \rbr &= \frac{\hbar}{2\pi} \int_{\!-\infty}^{+\infty} \!\!\! \rd \omega \, \omega^2 \coth(\beta \hbar \omega/2) \, \mathrm{Im} \! \left[ \bGhat(-\imath \omega) \right] .
\end{align}
Eqs.~(\ref{eq:<X^2>},\ref{eq:<P^2>}) are identical to the results obtained
by \cite{NQBM} for the asymptotic values of variances corresponding to an uncorrelated initial state.
Therefore we have proven that the single-time correlations of the open system relax to those of the closed system thermal state.

\subsection{Equivalence of multi-time correlations}
\label{sec:generalequivalence}

In this section we generalize the results of the previous section to include
multi-time correlations.
We begin by calculating the two-time correlation function  $\lbr \bX(t) \,
\bX(t')^\mathrm{T} \rbr$ using the trajectories obtained from the Langevin
equation.
Note that for the closed system thermal state this quantity is stationary. To simplify the proof we make
use of this observation and take the late-time limit of the closed system thermal
state as well without loss of generality. This trick makes the comparison of
the two cases easier and reduces the amount of computation.

The dynamics of the system is given by the solution \eqref{eq:X(t)} of the
Langevin equation which is valid for any initial state. The dependence
on
initial state is hidden in the correlations between $\bXz,\ \bPz$ and
$\bxi(t)$. The
two-time position correlation is given by
\begin{widetext}
\begin{align}
\nonumber
\lbr \bX(t) \, \bX(t')^\mathrm{T} \rbr &= \bGdot(t) \, \bM \lbr \bXz \, \bXtz \rbr \bM \, \bGdot(t')+ \bG(t) \lbr \bPz \, \bPtz \rbr \bGt(t') \\
&\nonumber + \bGdot(t) \, \bM \int_0^{t'} \!\! \rd s' \, \lbr \bXz \,
\bxi(s')^\rT \rbr \bG(t' \!-\! s')
+\int_0^{t} \!\! \rd s \, \bG(t \!-\! s) \lbr \bxi(s) \, \bXtz \rbr \bM
\, \bGdot(t') \\
&\nonumber + \bG(t) \int_0^{t'} \!\! \rd s' \, \lbr \bPz \, \bxi(s')^\rT \rbr
\bG(t' \!-\! s')
+\int_0^{t} \!\! \rd s \, \bG(t \!-\! s) \lbr \bxi(s) \, \bPtz \rbr
\bGdot(t') \\
\label{eq:<X(t)X(t')>} &+ \int_0^t \!\! \rd s \int_0^{t'} \!\! \rd s' \,
\bG(t \!-\! s) \lbr \bxi(s) \, \bxi(s')^\rT \rbr \bG(t' \!-\! s') \, .
\end{align}
\end{widetext}
As mentioned earlier unlike the uncorrelated initial state the terms in the
second and third lines do not vanish in the closed system thermal state.
We consider the case where
\begin{align}
 \lim_{t\rightarrow \infty} \bG(t)=
\lim_{t\rightarrow \infty} \bgamma(t)=0 \, .
\end{align}
This is the criteria for dissipative dynamics. Under these assumptions the first two terms in
Eq.~\eqref{eq:<X(t)X(t')>} vanish in the late-time limit for any initial
state.
The terms in the second and third lines have one factor of $\bG(t)$ or
$\bGdot(t)$ that goes to zero in the late-time limit multiplied by a
convolution integral. In App.~\ref{app:singletime} we show that these
convolution integrals are finite. Hence the terms in
second and third lines also vanish asymptotically. Finally we show the
equivalence of the term in the last line for the uncorrelated and thermal
initial states at late times in App.~\ref{app:multitime}.

The comparison of more general multi-time correlations can be done
similarly using the trajectories of the Langevin equation. The above example
demonstrates how in the late-time limit the effects of initial conditions of
the system die out and the noise statistics of both preparations converge. The
equivalence at the level of trajectories ensures that all the multi-time
correlations will be identical.

Let us reiterate the result we just obtained: a linear system
linearly coupled to a linear thermal reservoir (with uncountably many degrees of
freedom) at inverse temperature $\beta$ does relax to the equilibrium state
described by \eqref{eq:ER}. This state is different from the Boltzmann state
given by \eqref{eq:ES} whenever the interaction between the system and
environment is not negligible. Moreover the
multi-time correlations of system observables also relax to their corresponding
values in the closed system thermal state.

\subsection{The effect of coarse graining}\label{sec:coarse}

Up until this point we only focused on the system degrees of freedom. Now we
turn our attention to the environment.
Following Ref.~\cite{NQBM,Fleming11T}, the trajectories of the environment
oscillators, as driven by the system oscillators, are given by
\begin{align}
\mathbf{x}(t) &= \left[ \dot{\mathbf{f}}(t) \, \mathbf{m} \, \mathbf{x}(0) + \mathbf{f}(t) \, \mathbf{p}(0) \right] + \mathbf{f}(t) * \mathbf{g} \, \mathbf{X}(t) \, , \label{eq:x(t)}
\end{align}
in terms of their free propagator $\mathbf{f}(t)$ and frequency matrix
$\boldsymbol{\omega}$ given by Eqs.~(\ref{eq:f(t)},\ref{eq:omega}).
Into Eq.~\eqref{eq:x(t)} we substitute the system trajectories, which are damped oscillations driven by noise for the continuum environment:
\begin{align}
\mathbf{X}(t) &= \left[ \dot{\mathbf{G}}(t) \, \mathbf{M} \, \mathbf{X}(0) + \mathbf{G}(t) \, \mathbf{P}(0) \right] + \mathbf{G}(t) * \boldsymbol{\xi}(t) \, .
\end{align}
We then find the environmental dependence upon the initial state of the system to be
\begin{align}
\mathbf{x}(t) &= \mathbf{f}(t) \, \mathbf{g} * \left[ \dot{\mathbf{G}}(t) \, \mathbf{M} \, \mathbf{X}(0) + \mathbf{G}(t) \, \mathbf{P}(0) \right] + \cdots \, ,
\end{align}
with all additional terms only dependent upon the initial state of the environment.
The system-dependent terms correspond to a convolution of harmonic oscillations of the environment with non-locally damped oscillations of the system.
Resolving these integrals leads to some terms which oscillate with environment frequencies $\omega$ and do not decay.

As a simple example, consider the local damping of a single system oscillator.
The open-system propagator or Green's function is given by
\begin{align}
G(t) &= \frac{\sin(\tilde{\Omega} t)}{M \tilde{\Omega}} e^{-\gamma_0 t} \, , \\
\tilde{\Omega} &= \sqrt{\Omega^2 - \gamma_0^2} \, .
\end{align}
The environment's dependence upon the initial state of the system is given by
\begin{align}
x_k(t) =& \left\{ X(0) \frac{\rd}{\rd t} + \frac{P(0)}{M} \right\} \left\{
\frac{\rd^2}{\rd t^2} - 2 \gamma_0 \frac{\rd}{\rd t} + \Omega^2 \right\} h_k(t)
\nonumber \\
& + \, \cdots \, , \\
h_k(t) \equiv& \frac{g_k f_k(t) }{\left(\omega_k^2 - \Omega^2\right)^2 + 4 \gamma_0^2 \omega_k^2} \, ,
\end{align}
plus terms that decay exponentially and the terms which depend upon the initial state of the environment.
The function $h_k(t)$ oscillates forever, the same as $f_k(t)$, and therefore the environment retains information pertaining to the initial state of the system forever.
However, this information is not measurable forever.
The system only interacts with the integrated trajectories, which resolve to a convolution of the damping kernel and open-system propagators.
\begin{align}
\mathbf{g}^\mathrm{T} \mathbf{x}(t) &= - 2 \, \dot{\boldsymbol{\gamma}}(t) * \left[ \dot{\mathbf{G}}(t) \, \mathbf{M} \, \mathbf{X}(0) + \mathbf{G}(t) \, \mathbf{P}(0) \right] + \cdots \, ,
\end{align}
and upon integrating over a continuum of environment frequencies (here performed by multiplication with the infinite matrix $\mathbf{g}^\mathrm{T}$)
the oscillatory terms decay in time.
Thus the late-time limit and coarse graining together are responsible for the erasure of all information pertaining to the initial state of the system.

\section{General systems}\label{sec:nonlinear}
\subsection{Steady state}
The time-evolution of the reduced density matrix of the open system can be generated by a perturbative master equation
\begin{align}
\dot{\boldsymbol{\rho}}_\mathrm{S}(t) &= \boldsymbol{\mathcal{L}}(t) \! \left\{ \boldsymbol{\rho}_\mathrm{S}(t) \right\}  ,
\end{align}
where the Liouville operator can be expanded in terms of the interaction Hamiltonian by a variety of methods \cite{Kampen97,Breuer03,Strunz04,QOS}.
\begin{align}
\boldsymbol{\mathcal{L}}(t) &= \boldsymbol{\mathcal{L}}_0 + \boldsymbol{\mathcal{L}}_1(t) + \boldsymbol{\mathcal{L}}_2(t) + \cdots \, , \\
\boldsymbol{\mathcal{L}}_0 \{ \boldsymbol{\rho}  \}&= -\imath [ \mathbf{H}_\mathrm{S} , \boldsymbol{\rho} ] \, ,
\end{align}
In general, $\boldsymbol{\mathcal{L}}_1(t)$ can be absorbed into the system Hamiltonian and so we will primarily concern ourselves with the second-order term.
For simplicity we will assume there is no degeneracy or near-degeneracy in the system energy spectrum.

Expanding the interaction Hamiltonian in terms of system $\mathbf{L}_n$ and environment $\mathbf{l}_n$ operators
\begin{align}
\mathbf{H}_\mathrm{I} &= \sum_n \mathbf{L}_n \otimes \mathbf{l}_n \, , \label{eq:tensorInt}
\end{align}
the multivariate master equation can be represented \cite{QOS}
\begin{equation}
\boldsymbol{\mathcal{L}}_2 \, \boldsymbol{\rho} = \sum_{nm} \left[ \mathbf{L}_n, \boldsymbol{\rho} \, (\mathbf{A}_{nm}\! \diamond \mathbf{L}_m)^\dagger - (\mathbf{A}_{nm}\! \diamond \mathbf{L}_m) \, \boldsymbol{\rho} \right] \, , \label{eq:WCGME}
\end{equation}
where the $\mathbf{A}$ operators and $\diamond$ product define the second-order operators
\begin{equation}
(\mathbf{A}_{nm}\! \diamond \mathbf{L}_m)(t) \equiv \int_0^t \!\! \rd s \,
\alpha_{nm}(t,s) \, \left\{ \boldsymbol{\mathcal{G}}_0(t,s) \,
\mathbf{L}_m(s) \right\} \, , \label{eq:WCOG}
\end{equation}
in terms of the zeroth-order (state) propagator of the system
\begin{align}
\boldsymbol{\mathcal{G}}_0(t,s) \{ \boldsymbol{\rho} \} &= e^{-\imath (t-s)
\mathbf{H}_\mathrm{S}} \, \boldsymbol{\rho} \, e^{+\imath (t-s)
\mathbf{H}_\mathrm{S}} \, ,
\end{align}
and the (multivariate) environmental correlation function
\begin{eqnarray}
\alpha_{nm}(t,s) & \equiv & \left \langle \underline{\mathbf{l}}_n(t) \,
\underline{\mathbf{l}}_m(s) \right \rangle_{\mathrm{E}} \, .
\label{eq:alpha}
\end{eqnarray}
The second-order operator can be expressed as the Hadamard product
\begin{equation}
\bra{\omega_i} \mathbf{A}_{nm}\! \diamond \mathbf{L}_m \ket{\omega_{i'}} = A(\omega_i \!-\! \omega_{i'}) \, \bra{\omega_i} \mathbf{L}_m \ket{\omega_{i'}} \, , \label{eq:Hadamard1}
\end{equation}
and, in the late-time limit, the second-order coefficients resolve
\begin{align}
A_{nm}(\omega) &= \frac{1}{2} \tilde{\alpha}_{nm}(\omega) - \imath \, \mathcal{P}\!\left[\frac{1}{\omega}\right] * \tilde{\alpha}_{nm}(\omega) \, , \label{eq:Hadamard2}
\end{align}
where $\tilde{\boldsymbol{\alpha}}(\omega)$ denotes the Fourier transform of
the stationary environment correlation function $\boldsymbol{\alpha}(t - s) =
\boldsymbol{\alpha}(t,s)$, $\mathcal{P}$ the Cauchy principal value and $*$
the appropriate Fourier convolution.

With the multivariate master equation detailed, we can prove relation \eqref{eq:ER} to second order in the interaction.
This generalizes the univariate proof in Ref.~\cite{Mori08}, which considered a single tensor-product interaction between the system and environment.
As the proof is straightforward in either case, we will give an outline and focus upon differences which arise in the multivariate treatment.

We are looking for the stationary state $\boldsymbol{\rho}_\beta$, such that
\begin{align}
\boldsymbol{\mathcal{L}} \{ \boldsymbol{\rho}_\beta \} &= 0 ,
\end{align}
we know from detailed balance that the zeroth-order stationary state is the
thermal state \eqref{eq:ES}, e.g. see \cite{QOS}.
Second-order corrections can be generated from the second-order master equation via canonical perturbation theory.
More explicitly, we have
\begin{equation}
\bra{\omega_i} \boldsymbol{\rho}_{\!\beta} \ket{\omega_j}_{i \neq j} \propto e^{-\beta \, \omega_{i}} \, \delta_{ij} - \imath \frac{ \bra{\omega_i} \boldsymbol{\mathcal{L}}_2 \{ e^{-\beta \, \mathbf{H}} \} \ket{\omega_j} }{\omega_i - \omega_j} \label{eq:pthermME} \, ,
\end{equation}
but only for the denoted off-diagonal perturbative corrections (in the energy basis $\ket{\omega}$).
As explained in Ref.~\cite{Accuracy}, due to unavoidable degeneracy,
specifically that the diagonal elements are all stationary to zeroth-order,
the second-order master equation cannot determine the second-order corrections to the diagonal elements of the density matrix.
Fortunately this does not greatly hamper our proof of correspondence.
By a simple application of the multivariate master equation to Eq.~\eqref{eq:pthermME}, we easily obtain these second-order corrections to the thermal state of the system.

\subsection{Equilibrium state}
We wish to compare the straightforward expansion of \eqref{eq:pthermME} to the reduced closed system thermal state at second order,
and so we require a perturbative expansion of \eqref{eq:ER}.
There exists such a perturbative expansion of exponential matrices utilizing the identity
\begin{equation}
\frac{\rd}{\rd\epsilon} e^{\mathbf{A} + \epsilon \, \mathbf{B}} = e^{\mathbf{A}
+ \epsilon \, \mathbf{B}} \! \int_0^1 \!\! \rd u \, e^{-u(\mathbf{A} +
\epsilon \, \mathbf{B})} \, \mathbf{B} \, e^{+u(\mathbf{A} + \epsilon \,
\mathbf{B})}  \, ,
\end{equation}
to obtain an operator-Taylor series in the perturbation $\epsilon \, \mathbf{B}$.
After a fair amount of simplification, one can determine the second-order
stationary state to be
\begin{align}
\boldsymbol{\rho}_{\!\beta} \propto & \; e^{-\beta \, \mathbf{H}_\mathrm{S}} \label{eq:pthermB} \\
& + e^{-\beta \, \mathbf{H}_\mathrm{S}} \! \int_0^\beta \!\! \rd\beta' \! \int_0^{\beta'}
\!\!\! \rd\beta'' \left \langle \underline{\mathbf{H}}_\mathrm{I}(-\imath
\beta')
\, \underline{\mathbf{H}}_\mathrm{I}(-\imath \beta'')
\right \rangle_{\mathrm{E}} \, , \nonumber
\end{align}
in terms of the complex-time operators
\begin{equation}
\underline{\mathbf{H}}_\mathrm{I}(-\imath \beta) \equiv e^{+\beta(\mathbf{H} + \mathbf{H}_\mathrm{E})} \, \mathbf{H}_\mathrm{I} \, e^{-\beta(\mathbf{H} + \mathbf{H}_\mathrm{E})} \, ,
\end{equation}
where the noise average is taken with respect to the free thermal state of the environment and
factors inside the environmental trace have been written to suggest their correspondence with the environmental correlation function evaluated at imaginary times.

The double integrals in Eq.~\eqref{eq:pthermB} reduce to
\begin{equation}
\sum_{nm} \int_0^\beta \!\!\! \rd\beta' \! \int_0^{\beta'} \!\!\! \rd\beta'' \,
\alpha_{nm}(-\imath \beta',-\imath \beta'') \, \underline{\mathbf{L}}_n(-\imath
\beta') \, \underline{\mathbf{L}}_m(-\imath \beta'') \, , \label{eq:pthermSM2}
\end{equation}
in terms of the complex-time operators
\begin{equation}
\underline{\mathbf{L}}(-\imath \beta) \equiv e^{+\beta\, \mathbf{H}_\mathrm{S}} \, \mathbf{L} \, e^{-\beta\, \mathbf{H}_\mathrm{S}} \, .
\end{equation}
After a Fourier expansion of the complex-time correlation functions, expressions \eqref{eq:pthermME} and \eqref{eq:pthermSM2} can be compared term-by-term in the energy basis wherein the imaginary-time integrals of Eq.~\eqref{eq:pthermSM2} can be resolved as the master equation operators were.
Though the two expressions will then be composed of the same objects, they will not immediately appear to be equivalent.
The final step is to apply the relevant multivariate Kubo-Martin-Schwinger (KMS) relations (also found in \cite{QOS})
\begin{equation}
\tilde{\boldsymbol{\alpha}}(+\omega) = \tilde{\boldsymbol{\alpha}}^{\!\mathrm{T}\!}(-\omega) \, e^{-\frac{\omega}{T}} = \tilde{\boldsymbol{\alpha}}^{*}(-\omega) \, e^{-\frac{\omega}{T}} \, , \label{eq:thermalsym}
\end{equation}
and then one can see that the two expressions are equivalent,
even in their missing diagonal perturbations.

As previously discussed, Eq.~\eqref{eq:pthermME} is missing diagonal perturbations due to degeneracies inherent in all perturbative master equations.
The same discrepancy in Eq.~\eqref{eq:pthermSM2} stems from the perturbative expansion of the equilibrium state being inherently secular in $\beta$.
Both expressions are equivalent and quite conveniently they both lack the same second-order corrections to the diagonal entries of the density matrix.
Therefore, as far as the second-order {dynamics} is concerned, our proof of correspondence is complete.

\subsection{Zero-Temperature Analysis}

Though correspondence was established,
the previous analysis was seen to be insufficient for calculation of low-temperature equilibrium states of the open system.
However, as we shall now show, at least for zero-temperature noise, it is still possible to easily construct the reduced closed system thermal states in terms of the same environmental correlation functions which occurred in the previous analysis.
The following relations were applied towards the inspection of two-level atoms
interacting via a zero-temperature quantum field in \cite{Dipole}.

In the zero-temperature regime we can apply mundane perturbation theory to
derive the stationary-state perturbations.
One merely considers the perturbed ground state of the system + environment
\begin{eqnarray}
\boldsymbol{\psi} &=& \boldsymbol{\psi}_0  + \boldsymbol{\psi}_1 + \boldsymbol{\psi}_2 + \cdots \, , \\
\boldsymbol{\psi}_0 & \equiv & \ket{0} \otimes \ket{0}_\mathrm{E} \, ,
\end{eqnarray}
and then traces out the environment
\begin{equation}
\boldsymbol{\rho}_{\beta} = \ket{0}\!\!\bra{0} + \left \langle
\boldsymbol{\psi}_2 \, \boldsymbol{\psi}_0^\dagger + \boldsymbol{\psi}_1 \,
\boldsymbol{\psi}_1^\dagger + \boldsymbol{\psi}_0 \,
\boldsymbol{\psi}_2^\dagger \right\rangle_{\!\mathrm{E}} + \cdots \, ,
\end{equation}
where we neglect the first moment of the reservoir as previously discussed.
Without loss of generality let us set the ground-state energy of the system to zero.
The second-order reduced ground state of ordinary perturbation theory provides
the following stationary state perturbation
\begin{equation}
\bra{\omega_i} \boldsymbol{\delta\!\rho}_{\!\beta} \ket{\omega_j} = \sum_{nmk} \frac{C_{ijk}^{nm}}{Z_0(\beta)} \bra{\omega_i} \mathbf{L}_m \ket{\omega_k} \bra{\omega_k} \mathbf{L}_n \ket{\omega_j} \, , \label{eq:G(beta)}
\end{equation}
where $Z_0(\beta)$ is the partition function of the free system and
with the off-diagonal (and non-resonant) coefficients given by
\begin{align}
& \left. C_{ijk}^{nm} \right|_{\omega_i \neq \omega_j} = + \mathrm{An}\!\left[ e^{-\beta \omega_k} \frac{A_{nm}(\omega_{ik})-A_{nm}(\omega_{jk})}{\omega_i - \omega_j} \right] \nonumber \\
& + \mathrm{An}\!\left[ \frac{e^{-\beta \omega_i} A_{mn}(\omega_{ki}) - e^{-\beta \omega_j} A_{mn}(\omega_{kj})}{\omega_i-\omega_j} \right]  ,
\end{align}
where $\omega_{ij} = \omega_i - \omega_j$ and $A_{nm}(\omega)$ are the second-order master equation coefficients in \eqref{eq:Hadamard2}.
`An' denotes the anti-Hermitian part; the Hermitian and anti-Hermitian parts are defined
\begin{align}
\mathrm{He}[ Q_{nm} ] &\equiv \frac{1}{2} ( Q_{nm} + Q_{mn}^* ) \, , \\
\mathrm{An}[ Q_{nm} ] &\equiv \frac{1}{2} ( Q_{nm} - Q_{mn}^* ) \, ,
\end{align}
and for univariate noise (one collective coupling to the reservoir) the Hermitian and anti-Hermitian parts are simply the real and imaginary parts.
In either case the anti-Hermitian part of \eqref{eq:Hadamard2} is the second term.
These off-diagonal perturbations perfectly correspond to the results of the previous section upon introducing the appropriate Boltzmann weights as we have.
Moreover we also have the diagonal (and resonant) coefficients determined to be
\begin{align}
& \left. C_{ijk}^{nm} \right|_{\omega_i = \omega_j} = \label{eq:Cinterpolate} \\
&  \mathrm{An}\!\left[ e^{-\beta \omega_k} \frac{d}{d\omega_i} A_{nm}(\omega_{ik}) + e^{-\beta \omega_i} \frac{d}{d\omega_i} A_{mn}(\omega_{ki}) \right] \, , \nonumber
\end{align}
where here the Boltzmann weights are guessed, as these relations have only been derived here for zero-temperature.
Therefore Eq.~\eqref{eq:Cinterpolate} is exact for zero-temperature and our best guess for the positive-temperature coefficients:
it has the correct functional dependence upon the Boltzmann weight and fourth-order master equation coefficients.
At worst this is an interpolation of the zero and high-temperature states.

\section{Discussion}
\label{sec:discussion}

In this work we investigate the equilibrium 
states of open quantum systems from
dynamics / non-equilibrium point of view. 
We show that starting from a product state \eqref{eq:product}
the open system which results from coarse graining the environment will evolve to a late-time steady state. This state can be expressed as the reduced state of the closed system thermal state at the temperature of the environment, i.e. Eq.~\eqref{eq:ER}.
This result is important when the system-environment coupling is not negligible\footnote{Based on the discussion of Fig.~\ref{fig:scale}, we expect our results to be most relevant to small systems.}, or alternatively, when relaxation rates
are not insignificant in relation to the system frequencies. In this case the stationary state of the system \eqref{eq:ER} differs from the
canonical Boltzmann state \eqref{eq:ES}.\footnote{In this paper we have not focused on the nature of this difference. A
quantification in terms of the
\textit{Hamiltonian of mean force} for the special case of an Ohmic environment
is given by Hilt et al. \cite{Hilt2011}. We intend
to address this issue in our future work.}
One might argue that this state is the closest one can get to thermalization in the strong coupling regime.\footnote{Alternatively one could define this state to be \emph{the} thermal state in the strong coupling regime. However this state depends on the specifics of the reservoir and the coupling to the reservoir. Hence it is not specified by the system parameters alone and referring to it as \emph{the} thermal state is, in our opinion, misleading.}
However in this paper we use the term equilibrium state for Eq.~\eqref{eq:ER} and reserve the term thermal state to the standard Boltzmann form \eqref{eq:ES}.

Our proof is exact for the linear model and to second order
in interaction strength for nonlinear models. Moreover, for the exactly solvable
linear case we prove the equivalence of multi-time correlations. The issue of
multi-time correlations in the context of equilibration/thermalization seems to be mostly
ignored in the literature. We argue that multi-time correlations are
important outside the Markovian regime, as was pointed out in \cite{CRV03}. For instance, the relaxation of multi-time correlations
cannot be deduced from the relaxation of the reduced density matrix of the system, neither can the explicit value of the multi-time correlations be derived from the equilibrium state, if the dynamics is non-Markovian. In this respect our analysis of the linear N-QBM model provides insight into equilibration phenomena beyond the density matrix formalism.

A complete proof, which would be non-perturbative for non-linear systems, would
have to be very different than the second-order proof presented here. Our
nonlinear proof, though very general in its application to different systems
and environments, is not robust enough for non-perturbative multi-time
correlations. It is not immediately clear how such a proof could be attempted,
whereas the elegance of the final result makes the possibility of its existence
seem reasonable.

An analogous proof for classical systems should be attempted by coarse graining the symplectomorphic (Hamiltonian) time evolution of the system and
environment in much the same way that quantum master equations result from
coarse graining the unitary time evolution of the system and environment.
Unfortunately the literature on such an analog is not well developed (e.g.,
it would involve higher-order Fokker-Plank equations which might only
perturbatively preserve probability) and
this would be more mathematically challenging than the quantum proof. Note
 that the $\hbar \rightarrow 0$ limit of the quantum results obtained in this
 paper yield the corresponding classical results, as has been argued in
 App.~\ref{sec:CGN}.

An essential ingredient of our proofs is the continuum limit for the
environment. For a finite environment the $t \rightarrow \infty$ limit of the
reduced state does not exist within the formalism presented here
and another ingredient is necessary to ensure relaxation to equilibrium.
Having classical molecular dynamics in mind,
we entertain the possibility that quantum chaos might be one avenue to explore.

On the other hand we can consider a large but finite environment. It can be argued that for any relevant times $t > 0$ the effect of an infinite reservoir can be approximated arbitrarily closely by a large but finite reservoir. Then equilibration is observed for the time-interval between the relaxation time and the recurrence time. Note that this interval is huge for a large environment, since the recurrence time grows very rapidly with the number of degrees of freedom. As a result the system stays close to its equilibrium most of the time. 
This interpretation helps us touch base with the results of \cite{Linden09,Reimann10,Short12} where relaxation in finite systems is proven for time averaged quantities.

\subsection{Comparison with recent literature}
To put this work in developmental context,
here we compare more specifically our results to that of Linden et al. \cite{Linden09}, Reimann \cite{Reimann10}, and Short and Ferrelly \cite{Short12}\footnote{See Sec.~\ref{sec:terminology} for the clarification of the different use of the term \emph{equilibration} in the literature and here.}. All these works have in common with us the set-up of a small system coupled to a large environment and relaxation is achieved dynamically via time-evolution. A major difference is the choice of initial conditions: they allow for any initial state, which is spread over sufficiently many energies, whereas we restrict our environment to be in a thermal state. In turn we can derive the form of the equilibrium state explicitly.

Unlike what is done here these authors all make the assumption of non-degenerate energy gaps (this assumption is relaxed to a certain degree in \cite{Short12}) and assume finite dimensional Hilbert spaces. The linear model we solved exactly here has infinitely degenerate energy gaps and we considered a reservoir consisting of an infinite number of degrees of freedom. Ref. \cite{Linden09} 
considers only pure states for the closed system (in the sprit of \cite{Popescu06,Goldstein06}). Finally they all define relaxation in terms of time averaged quantities, i.e. systems behave as if they are in their steady state most of the time. Ref. \cite{Short12} also provides upper limit for the relaxation time.

The proofs of \cite{Linden09,Short11,Short12,Reimann10} rely on the much greater dimensionality of the Hilbert space of the environment compared to that of the system.
The system + environment state is propagated as a whole using unitary dynamics. The fact that the environment is large is utilized in the tracing out of the environment at the end of time evolution. In this derivation the effect of the environment on the system dynamics is not so easily accessible.

In our proof, the fact that the environment consists of a large number of degrees of freedom manifests itself in the form of its decaying correlations. These correlations in turn determine the non-unitary aspects of the open system dynamics. We use this non-unitary open system dynamics to evolve the reduced state of the system to its equilibrium state. In particular we do not refer to the state of the closed system explicitly\footnote{Except for Sec.~\ref{sec:coarse}, where we do look at the individual environmental modes just to make the point that the closed system (S + E) does not equilibrate.}. Our derivation is more in the idioms of open quantum systems paradigm, where the influence of the environment on the system dynamics can be continuously monitored and explicitly expressed  (e.g., consistent backreaction from the environment is fully embodied in the influence functional \cite{Feynman63}).

Relaxation is demonstrated in  \cite{Linden09,Reimann10,Short12} for very general Hamiltonians, including strong coupling between the system and the environment. In their derivation the strong coupling regime does not present any extra difficulty. In the open system approach we adopted in this paper strong coupling is difficult to handle. On the other hand, as a benefit of our method we can describe the nature of the equilibrium state, i.e. Eq.~\eqref{eq:ER}, besides proving its existence and uniqueness.

\appendix

\section{The triviality of classical, Gaussian noise}\label{sec:CGN}

While Ref.~\cite{Geva00} gives many cases in quantum mechanics in which the effect of system-environment coupling on the equilibrium state may be overlooked,
here we would like to motivate the fact that this point is
often overlooked in the classical regime as well,
perhaps due to the ubiquitous employment of Gaussian noise.
Let us consider the Hamiltonian of a system coupled linearly, via the system operator $\mathbf{L}$,
to an environment of harmonic oscillators, indexed by $k$, which mock our Gaussian noise \cite{Feynman63,CaldeiraLeggett81}.
\begin{align}
\mathbf{H}_\mathrm{C} &= \mathbf{H}_\mathrm{S} + \sum_k \left[ \frac{\mathbf{p}_k^2}{2m_k} + \frac{m_k \omega_k^2}{2} \mathbf{x}_k^2 \right] + \mathbf{L} \sum_k g_k \mathbf{x}_k + \mathbf{H}_\mathrm{R} \, , \nonumber \\
&= \mathbf{H}_\mathrm{S} + \sum_k \left[ \frac{\mathbf{p}_k^2}{2m_k} +
\frac{m_k \omega_k^2}{2} \left( \mathbf{x}_k - \frac{g_k \mathbf{L}}{m_k
\omega_k^2} \right)^2 \right] \, ,
\end{align}
where the linear interaction is included in the square of the environment
potential as a means of ``renormalization".
Otherwise, the influence of the environment effectively introduces a negative $\mathbf{L}^2$ term proportional to the cutoff into the system Hamiltonian
when considering the open-system dynamics.

Tracing over the environmental degrees of freedom is equivalent to integrating over the environmental dimensions in phase space,
\begin{align}
\mathrm{Tr}_\mathrm{E} \!\left[ \cdots \right] &= \prod_k \int \!\! \rd x_k
\int \!\! \rd p_k \cdots \, ,
\end{align}
where classically-speaking, $x_k$ and $p_k$ are independent, commuting variables.
Therefore, in the classical and Gaussian model, relations \eqref{eq:ES} and \eqref{eq:ER} are equivalent
as tracing over the environmental degrees of freedom constitutes a trivial
Gaussian integral in phase space. The classical result can also be reached as
the $\hbar \rightarrow 0$ limit of the quantum result. This limit is most
straight-forward when applied to the Wigner function \cite{Hillery84} defined
as:
\begin{align}
W\!\left( x,p\right) = \frac{1}{2\pi \hbar} \int \!\! \rd u \,
e^{\frac{\iota}{\hbar}p u} \rho(x \!-\! u/2,x \!+\! u/2) \, .
\end{align}
The description in terms of the Wigner function is equivalent to the density
matrix approach. Hence the Wigner function contains complete information about
the quantum system. As a result the Wigner function \textit{should not} be
treated as a phase space distribution, since it can assume negative values.
However the $\hbar\rightarrow 0$ limit of the thermal state Wigner function is
well-defined and gives the classical Boltzmann distribution function:
\begin{align}
 \lim_{\hbar\rightarrow 0}W_\beta(x,p)= \frac{e^{-\beta H(x,p)}}{Z(\beta)} \, .
\end{align}

For classical open systems it is well known that if the system + environment is
in a thermal state of the full Hamiltonian, which includes the system-environment coupling, then the reduced distribution of the system is in general not the
thermal distribution of the system Hamiltonian alone. The term
\textit{potential of mean force} is used in chemical-physics literature for the
quantity that replaces the Hamiltonian in the familiar Boltzmann distribution
\cite{Kirkwood35}.
The linear reservoir is a special case where the potential of mean
force coincides with the system Hamiltonian. The potential of mean force is
defined by\footnote{In most treatments $\bH_\rR$ is absent. In that case even
for linear reservoir $\bH^*$ differs from $\bH_\rS$ by a frequency
``renormalization".}:
\begin{align}
& \mathbf{H}^*(\mathbf{X},\mathbf{P}) \equiv - \frac{1}{\beta} \log
\frac{\prod_k \int \!\! \rd x_k \int \!\! \rd p_k \, e^{-\beta \,
\bH_{\rC}(\bX,\bP;\bx,\bp) }}
{\prod_{k'} \int \!\! \rd x_{k'} \int \!\! \rd p_{k'} \, e^{-\beta \,
\bH_\rE(\bx,\bp)}} \, .
\end{align}

To the best of our knowledge,
the asymptotic time evolution of a general classical open system, with a
nonlinear environment initially in its thermal state, is not
known.
We conjecture that the reduced system is asymptotically described by $e^{-\beta \bH^*}$ as described in the previous paragraph,
and as would follow from \eqref{eq:ER}.
In this paper we provide a proof of the analogous statement for quantum systems to second order in interaction strength.
Obviously, our second-order proof extends to classical systems which can arise in the limit $\hbar \to 0$.
For linear systems we have an exact proof,
and unlike its classical counterpart, the quantum linear case is highly nontrivial.

\section{Theorems on matrix derivatives}
\label{app:matrixtheorems}

\textit{Notation and Remarks}: A letter in bold like $\bA$ indicates a matrix.
Referring to an element of the matrix we use subscripts: $\bA_{\ra \rb}$. The
inverse of the matrix is indicated by
$\bA^{-1}$. An element of the inverse matrix is written as $(\bA^{-1})_{\ra
\rb}$
to avoid confusion with $1/\bA_{\ra \rb}$. Transpose of the matrix is denoted
by $\bA^{\mathrm{T}}$. $\tr$ without a subscript indicates ordinary matrix
trace. $\tr_\rC$ indicates quantum mechanical trace over the closed system
Hilbert space. A systematic study of matrix derivatives including some of the
theorems below is given by \cite{Dwyer67}.

Before proceeding to the derivations we clarify a mathematical subtlety. The
theorems derived in this appendix will mostly be applied to symmetric matrices
for which $\bA_{\ra \rb}=\bA_{\rb\ra}$. When taking the derivative of such a
matrix with respect to one of its elements one can adopt two
different conventions.
If the derivative is taken under the constraint that only symmetric variations
of the matrix is allowed the result is:
\begin{align}
 \frac{\del \bA_{\ra \rb}}{\del
\bA_{\rc\rd}}=\delta_{\ra
\rc}\delta_{\rb \rd}+\delta_{\ra\rd}\delta_{\rb\rc}(1-\delta_{\ra \rb}) \, .
\end{align}
On the other hand if independent variations of all matrix elements are allowed
the second term in the above equation is absent. In the following theorems we
adopt the second convention.

\begin{theorem}
\label{thm:difftrick}
Consider a system in a thermal state
at inverse temperature $\beta$ described by a
Hamiltonian with parametric dependence on a set of variables $\{\lambda_n\}$.
Then the expectation value of the derivative of the Hamiltonian with respect to these
parameters can be calculated from the partition function by:
\begin{align}
\label{eq:difftrick}
\left \langle \frac{\partial \bH }{\partial \lambda_n} \right \rangle_\rC
\equiv \tr_\rC \! \left[ \frac{\partial \bH}{\partial \lambda_n}
\frac{e^{-\beta \bH}}{Z} \right]
=-\frac{1}{\beta}\frac{\partial}{\partial \lambda_n}\ln(Z) \, .
\end{align}
\end{theorem}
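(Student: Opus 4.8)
The plan is to differentiate the partition function $Z = \tr_\rC[e^{-\beta\bH}]$ directly with respect to $\lambda_n$ and then exploit the cyclicity of the trace to absorb the non-commutativity of $\bH$ with $\del\bH/\del\lambda_n$. First I would invoke the Duhamel-type identity for parameter derivatives of matrix exponentials — essentially the same family of identities used above in the perturbative expansion of $e^{\bA+\epsilon\bB}$ — which in the present notation reads
\begin{align}
\frac{\del}{\del\lambda_n}e^{-\beta\bH} &= -\int_0^\beta \rd u\; e^{-(\beta-u)\bH}\,\frac{\del\bH}{\del\lambda_n}\,e^{-u\bH}\, .
\end{align}
This follows, for instance, by writing $e^{-\beta\bH}$ as a product of $N$ factors $e^{-(\beta/N)\bH}$, differentiating by the Leibniz rule, and letting $N\to\infty$; equivalently it is the stated $\epsilon$-derivative identity specialized to $\bA=-\beta\bH$ after a rescaling of the integration variable.

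The key step is then to take $\tr_\rC$ of both sides. By cyclic invariance of the trace, $\tr_\rC\!\left[e^{-(\beta-u)\bH}(\del\bH/\del\lambda_n)e^{-u\bH}\right]=\tr_\rC\!\left[e^{-\beta\bH}\,\del\bH/\del\lambda_n\right]$ for every $u$, so the $u$-integration merely contributes a factor $\beta$:
\begin{align}
\frac{\del Z}{\del\lambda_n} &= -\beta\,\tr_\rC\!\left[\frac{\del\bH}{\del\lambda_n}\,e^{-\beta\bH}\right]\, .
\end{align}
Dividing by $Z$, using $\del_{\lambda_n}\ln Z = Z^{-1}\,\del_{\lambda_n}Z$, and invoking the definition $\langle\,\cdot\,\rangle_\rC = \tr_\rC[\,\cdot\;e^{-\beta\bH}/Z]$ then yields Eq.~\eqref{eq:difftrick} directly.

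The only genuine obstacle is operator ordering: since $\bH$ and $\del\bH/\del\lambda_n$ need not commute, one cannot simply write $\del_{\lambda_n}e^{-\beta\bH}=-\beta(\del\bH/\del\lambda_n)\,e^{-\beta\bH}$. The Duhamel identity handles this, and the subsequent appeal to cyclicity shows that, once under the trace, the result coincides with what the naive scalar-like manipulation would give. Beyond this there are no analytic subtleties provided $Z$ is finite and $\bH(\{\lambda_n\})$ depends smoothly on the parameters, which is assumed throughout; for the unbounded Hamiltonians relevant to the N-QBM model the identity is justified mode by mode after diagonalization.
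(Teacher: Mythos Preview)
Your proof is correct and is essentially identical to the paper's: both invoke the Duhamel identity for $\del_{\lambda_n}e^{-\beta\bH}$ (the paper writes it with $\mathcal{O}=-\beta\bH$ and integration over $[0,1]$, you with the rescaled variable over $[0,\beta]$), then use cyclicity of the trace to collapse the integral and obtain Eq.~\eqref{eq:difftrick}. The only cosmetic difference is that the paper starts from the right-hand side and works toward the left, while you go the other direction.
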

\begin{proof}
In this proof we will make use of the following operator
identity valid for an arbitrary operator $\boldsymbol{\mathcal{O}}$:
\begin{align}
\frac{\partial}{\partial \lambda_n} e^{\boldsymbol{\mathcal{O}}} = \int_0^1
\rd u \, e^{u\boldsymbol{\mathcal{O}}} \, \frac{\partial
\boldsymbol{\mathcal{O}}}{\partial\lambda_n} \,
e^{(1-u)\boldsymbol{\mathcal{O}}} \, .
\end{align}
Using this formula we can write the RHS of Eq.~\eqref{eq:difftrick} as:
\begin{align}
-\frac{1}{\beta}\frac{\partial}{\partial \lambda_n}\ln(Z)
&=-\frac{1}{\beta Z}\tr_\rC \! \left[\frac{\partial }{\partial \lambda_n}e^{-\beta \bH}\right] , \\
&=\frac{1}{\beta Z}\tr_\rC \! \left[-\int_0^1 \rd u \, e^{-u \beta\bH} \,
\frac{\partial \beta \bH}{\partial\lambda_n} \, e^{-(1-u)\beta\bH}\right] .
\end{align}
We use the cyclic property of trace to get:
\begin{align}
 -\frac{1}{\beta}\frac{\partial}{\partial \lambda_n}\ln(Z)
&= \frac{1}{ Z}\tr_\rC \! \left[\int_0^1 \rd u \, \frac{\partial \bH}{\partial
\lambda_n} \,  e^{-\beta\bH} \right] , \\
&= \frac{1}{Z}\tr_\rC \! \left[ \frac{\partial \bH}{\partial \lambda_n} \, e^{-\beta\bH}\right] , \\
& =\left \langle \frac{\partial \bH}{\partial \lambda_n} \right \rangle_\rC .
\end{align}
\end{proof}

\begin{theorem}
\label{thm:trlog}
For a matrix $\bA$
\begin{align}
 \tr \log \mathbf{A}=\log \det\mathbf{A} \, .
\end{align}
\end{theorem}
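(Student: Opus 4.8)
The plan is to reduce both sides to a sum over the eigenvalues of $\bA$. Throughout, $\log\bA$ denotes the principal matrix logarithm, which is well defined whenever $\bA$ has no eigenvalue on the non-positive real axis; this covers every instance in which the theorem is invoked in this paper, since there $\bA$ is real and positive definite, so all of its eigenvalues are strictly positive.

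First I would bring $\bA$ to triangular form. By the Schur (or Jordan) decomposition there is an invertible $\mathbf{S}$ with $\bA = \mathbf{S}\,\mathbf{T}\,\mathbf{S}^{-1}$, where $\mathbf{T}$ is upper triangular with the eigenvalues $\lambda_1,\dots,\lambda_n$ of $\bA$ on its diagonal. Since any principal-branch analytic function $\varphi$ of a matrix respects similarity, $\varphi(\bA) = \mathbf{S}\,\varphi(\mathbf{T})\,\mathbf{S}^{-1}$; in particular $\log\bA = \mathbf{S}\,(\log\mathbf{T})\,\mathbf{S}^{-1}$, and $\log\mathbf{T}$ is again upper triangular, now with diagonal entries $\log\lambda_i$. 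Using invariance of the trace and the determinant under similarity, together with the fact that the trace (resp.\ determinant) of a triangular matrix is the sum (resp.\ product) of its diagonal entries, one obtains
\begin{align}
\tr\log\bA &= \tr\log\mathbf{T} = \sum_{i=1}^{n} \log\lambda_i \, , \\
\log\det\bA &= \log\det\mathbf{T} = \log\!\prod_{i=1}^{n} \lambda_i = \sum_{i=1}^{n} \log\lambda_i \, ,
\end{align}
and the two right-hand sides agree, which proves the claim.

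The step I expect to require the most care is the branch of the logarithm: the matrix logarithm is multivalued, so the identity is a statement about a fixed branch, and one must check that the branch used in defining $\log\bA$ is consistent with the scalar identity $\log\prod_i\lambda_i = \sum_i\log\lambda_i$. For the positive-definite matrices of interest every $\lambda_i>0$ and the principal branch is unambiguous, so no difficulty arises; for general $\bA$ one restricts to matrices with no spectrum on $(-\infty,0]$ and uses the principal branch throughout. A branch-free alternative, which I would mention as a remark, is to invoke Jacobi's formula $\tfrac{\rd}{\rd t}\det\mathbf{M}(t)=\det\mathbf{M}(t)\,\tr[\mathbf{M}(t)^{-1}\dot{\mathbf{M}}(t)]$ applied to $\mathbf{M}(t)=e^{t\log\bA}$, which yields $\det\bA=e^{\tr\log\bA}$ directly; taking logarithms then gives the result.
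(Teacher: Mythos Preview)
Your proof is correct and follows essentially the same route as the paper's: reduce both sides to $\sum_i \log\lambda_i$ via a canonical form. The only difference is that the paper simply passes to the basis in which $\bA$ is diagonal (implicitly assuming diagonalizability, which is fine for the positive-definite matrices at hand), whereas you use the Schur/Jordan decomposition and are more explicit about the branch of the logarithm; this makes your version slightly more general and more careful, but the underlying idea is identical.
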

\begin{proof}
Trace operation is basis-independent. In the basis in which
$\mathbf{A}$ is diagonal $\log\mathbf{A}$ is also a diagonal matrix
with entries $\log a_n$ where $a_n$ are the eigenvalues of $\mathbf{A}$.
Taking the trace gives:
\begin{align}
 \tr \log \mathbf{A}=\sum_n \log a_n=\log \left(\prod_n
a_n\right) \, .
\end{align}
The last expression is recognized to be $\log \det\mathbf{A}$ since the
product of eigenvalues equals the determinant.
\end{proof}

\begin{theorem}
\label{thm:TrLogPerm}
For an arbitrary number of matrices
$\bA_k$ indexed by $k$, the following is true:
\begin{align}
\tr \log\!\left( \prod_k \bA_k \right) = \sum_k \tr \log\!\left(\mathbf{A}_k\right) .
\end{align}
\end{theorem}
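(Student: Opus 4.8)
The plan is to reduce the statement to two elementary facts: the identity $\tr \log \mathbf{A} = \log \det \mathbf{A}$ already established in Theorem~\ref{thm:trlog}, and the multiplicativity of the determinant, $\det(\mathbf{A}\mathbf{B}) = \det\mathbf{A}\,\det\mathbf{B}$. Applying Theorem~\ref{thm:trlog} to the left-hand side gives $\tr \log\!\left(\prod_k \mathbf{A}_k\right) = \log \det\!\left(\prod_k \mathbf{A}_k\right)$, and iterating the multiplicativity of the determinant over the (finite) index set yields $\det\!\left(\prod_k \mathbf{A}_k\right) = \prod_k \det \mathbf{A}_k$.

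The second step is to split the logarithm of the product into a sum of logarithms, $\log \prod_k \det \mathbf{A}_k = \sum_k \log \det \mathbf{A}_k$, and then invoke Theorem~\ref{thm:trlog} once more, in the reverse direction, on each summand to get $\sum_k \log \det \mathbf{A}_k = \sum_k \tr \log \mathbf{A}_k$. Chaining these equalities produces the claimed identity, so the proof is essentially a three-line computation once Theorem~\ref{thm:trlog} is in hand.

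The main obstacle, and the only place where any care is required, is the step $\log \prod_k \det \mathbf{A}_k = \sum_k \log \det \mathbf{A}_k$: this holds for positive reals but can fail over $\mathbb{C}$ because of the branch cut of the complex logarithm, and moreover $\tr \log \mathbf{A}$ and $\log \det \mathbf{A}$ are themselves unambiguous only when $\mathbf{A}$ has no eigenvalue on $(-\infty,0]$. In every application in this paper — in particular Eq.~\eqref{eq:logPF}, where the relevant matrices are $\mathbf{M}$, $\mathbf{C}$, and $\mathbf{M}^{-1}\hat{\mathbf{G}}(\nur)^{-1}$ — the matrices are real and positive definite, so all of the determinants are strictly positive reals and the identity holds with no ambiguity. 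Accordingly I would either state the theorem under the hypothesis that the $\mathbf{A}_k$ are positive definite (or, more weakly, that each $\mathbf{A}_k$ and the product admit a principal logarithm with the relevant determinants positive), after which the argument above goes through verbatim.

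An alternative route is induction on the number of factors, with base case the two-matrix statement $\tr \log(\mathbf{A}\mathbf{B}) = \tr \log \mathbf{A} + \tr \log \mathbf{B}$; but this merely repackages the same determinant identity and the same branch-of-logarithm caveat, so the direct argument above is the cleanest presentation.
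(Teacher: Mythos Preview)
Your proposal is correct and follows essentially the same route as the paper's proof: apply Theorem~\ref{thm:trlog} to turn $\tr\log$ into $\log\det$, use multiplicativity of the determinant, split the scalar logarithm, and apply Theorem~\ref{thm:trlog} again in reverse. The only difference is that you flag the branch-of-logarithm issue and propose restricting to positive-definite matrices, whereas the paper simply invokes ``properties of ordinary logarithms'' without comment; your caveat is a reasonable refinement but not part of the paper's argument.
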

\begin{proof}
To show this equality we make use of
Theorem~\ref{thm:trlog}, the well known fact that
the determinant of the product of matrices equals the product of the
determinants and properties of ordinary logarithms:
\begin{align}
\tr \log\!\left( \prod_k \bA_k \right)
&= \log \det \!\left( \prod_k \bA_k \right) , \\
&= \log \!\left( \prod_k \det \bA_k \right) ,  \\
&= \sum_k \log \det \bA_k \, , \\
&= \sum_k \tr \log\!\left(\mathbf{A}_k\right)  .
\end{align}
\end{proof}
A corollary of this theorem is the fact that $\tr \log$ is invariant under
any permutation of its arguments.

\begin{theorem}
\label{thm:dertrlog}
Consider a matrix $\mathbf{A}$ and a
parameter $\lambda$.
Then:
\begin{align}
\label{eq:dertrlog1}
 \frac{\partial}{\partial \lambda}\tr\log \mathbf{A} =
\tr\left[\mathbf{A}^{-1} \frac{\partial \mathbf{A}}{\partial \lambda}
\right] \, .
\end{align}
In particular:
\begin{align}
\label{eq:dertrlog2}
 \frac{\partial}{\partial \mathbf{A}}\tr\log \mathbf{A} =
(\mathbf{A}^{-1})^\mathrm{T} \, .
\end{align}
where $\frac{\partial}{\partial \mathbf{A}}$ is defined as the matrix obtained
by differentiating with respect to the entries of matrix $\mathbf{A}$.
\end{theorem}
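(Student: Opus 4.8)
The plan is to reduce everything to the scalar identity for the determinant. First I would invoke Theorem~\ref{thm:trlog} to replace $\tr\log\mathbf{A}$ by $\log\det\mathbf{A}$, so that \eqref{eq:dertrlog1} becomes $\frac{\partial}{\partial\lambda}\log\det\mathbf{A} = \tr[\mathbf{A}^{-1}\,\frac{\partial\mathbf{A}}{\partial\lambda}]$, and by the chain rule this is equivalent to Jacobi's formula $\frac{\partial}{\partial\lambda}\det\mathbf{A} = \det\mathbf{A}\,\tr[\mathbf{A}^{-1}\,\frac{\partial\mathbf{A}}{\partial\lambda}]$, which is where the real work lies.

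To establish Jacobi's formula I would first differentiate $\det\mathbf{A}$ with respect to a single entry $\mathbf{A}_{\ra\rb}$. Expanding the determinant along the $\ra$-th row, $\det\mathbf{A} = \sum_{\rb}\mathbf{A}_{\ra\rb}\,\mathrm{cof}_{\ra\rb}(\mathbf{A})$, where the cofactor $\mathrm{cof}_{\ra\rb}$ does not contain $\mathbf{A}_{\ra\rb}$; hence $\partial\det\mathbf{A}/\partial\mathbf{A}_{\ra\rb} = \mathrm{cof}_{\ra\rb}(\mathbf{A})$. Combining with the adjugate identity $\mathbf{A}^{-1} = \mathrm{adj}(\mathbf{A})/\det\mathbf{A}$, i.e. $\mathrm{cof}_{\ra\rb}(\mathbf{A}) = \det\mathbf{A}\,(\mathbf{A}^{-1})_{\rb\ra}$ (note the transposed indices), gives $\partial\det\mathbf{A}/\partial\mathbf{A}_{\ra\rb} = \det\mathbf{A}\,(\mathbf{A}^{-1})_{\rb\ra}$. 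Then the chain rule summed over all entries, $\frac{\partial}{\partial\lambda}\det\mathbf{A} = \sum_{\ra\rb}\frac{\partial\det\mathbf{A}}{\partial\mathbf{A}_{\ra\rb}}\frac{\partial\mathbf{A}_{\ra\rb}}{\partial\lambda}$, reassembles into $\det\mathbf{A}\sum_\rb(\mathbf{A}^{-1}\frac{\partial\mathbf{A}}{\partial\lambda})_{\rb\rb} = \det\mathbf{A}\,\tr[\mathbf{A}^{-1}\,\frac{\partial\mathbf{A}}{\partial\lambda}]$, which proves \eqref{eq:dertrlog1}.

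For \eqref{eq:dertrlog2} I would simply specialize \eqref{eq:dertrlog1} to the case where the parameter $\lambda$ is itself one of the matrix entries, $\lambda = \mathbf{A}_{\rc\rd}$. With the convention of independent variations adopted at the start of this appendix, $\partial\mathbf{A}_{\ra\rb}/\partial\mathbf{A}_{\rc\rd} = \delta_{\ra\rc}\delta_{\rb\rd}$, so the right-hand side collapses to $\sum_{\ra\rb}(\mathbf{A}^{-1})_{\rb\ra}\,\delta_{\ra\rc}\delta_{\rb\rd} = (\mathbf{A}^{-1})_{\rd\rc} = ((\mathbf{A}^{-1})^{\mathrm{T}})_{\rc\rd}$; collecting the entries yields the matrix identity $\frac{\partial}{\partial\mathbf{A}}\tr\log\mathbf{A} = (\mathbf{A}^{-1})^{\mathrm{T}}$.

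The main obstacle is the proof of Jacobi's formula, and in particular keeping the index placement straight in the adjugate identity $\mathrm{cof}_{\ra\rb}(\mathbf{A}) = \det\mathbf{A}\,(\mathbf{A}^{-1})_{\rb\ra}$: a careless accounting there flips a transpose and would spoil \eqref{eq:dertrlog2}; everything else is routine. As an alternative route that avoids cofactor bookkeeping altogether, one could use the integral representation $\log\mathbf{A} = \int_0^\infty[(1+t)^{-1}\mathbf{I} - (\mathbf{A}+t)^{-1}]\,\rd t$, valid when $\mathbf{A}$ has spectrum in the right half-plane, differentiate under the integral using $\frac{\partial}{\partial\lambda}(\mathbf{A}+t)^{-1} = -(\mathbf{A}+t)^{-1}(\frac{\partial\mathbf{A}}{\partial\lambda})(\mathbf{A}+t)^{-1}$, and then apply cyclicity of the trace together with $\int_0^\infty(\mathbf{A}+t)^{-2}\,\rd t = \mathbf{A}^{-1}$ to obtain \eqref{eq:dertrlog1} directly, deducing \eqref{eq:dertrlog2} as above.
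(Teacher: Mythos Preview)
Your argument is correct, but it follows a genuinely different route from the paper's own proof. The paper works directly with the matrix logarithm: writing $\mathbf{A}=\mathbf{1}+\mathbf{B}$, expanding $\log(\mathbf{1}+\mathbf{B})$ as a power series, differentiating term by term, invoking cyclicity of the trace to collect the derivative factors, and then resumming $\mathbf{1}-\mathbf{B}+\mathbf{B}^2-\cdots=(\mathbf{1}+\mathbf{B})^{-1}=\mathbf{A}^{-1}$. Your approach instead passes through Theorem~\ref{thm:trlog} to replace $\tr\log$ by $\log\det$ and then proves Jacobi's formula via the cofactor expansion and the adjugate identity. The second part, specializing to $\lambda=\mathbf{A}_{\rc\rd}$ with the independent-variation convention, is handled identically in both. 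Your route has the advantage of being purely algebraic and not requiring any convergence hypothesis on the logarithm series (it only needs $\mathbf{A}$ invertible), and it makes nice use of the theorem already established in the appendix; the paper's series manipulation is slightly more self-contained in that it never invokes determinants or cofactors, but at the cost of a formal-series step. Your alternative integral-representation argument is also sound and, like the series method, trades algebraic bookkeeping for an analytic assumption on the spectrum of $\mathbf{A}$.
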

\begin{proof}
Let $\bA\equiv 1+\bB$ and use
\begin{equation}
\log(1+\bB)=\bB-\bB^2/2 +\bB^3/3+\cdots \, ,
\end{equation}
to write the LHS of
Eq.~\eqref{eq:dertrlog1} as:
\begin{widetext}
\begin{align}
\frac{\partial}{\partial \lambda}\tr\left[ \bB-\frac{\bB^2}{2}+\frac{\bB^3}{3} + \cdots  \right] &=
\tr\left[ \frac{\partial \bB}{\partial \lambda}-\frac{1}{2}(\frac{\partial \bB}{\partial \lambda} \bB+\bB \frac{\partial \bB}{\partial \lambda})+\frac{1}{3}(\frac{\partial \bB}{\partial \lambda}\bB^2+\bB\frac{\partial \bB}{\partial \lambda}\bB+\bB^2\frac{\partial \bB}{\partial \lambda}) + \cdots  \right] .
\end{align}
\end{widetext}
Using the cyclic property of trace we obtain:
\begin{align}
 \tr\left[ \frac{\partial \bB}{\partial
\lambda}\left(1-\bB+\bB^2-\bB^3\cdots \right)  \right] \, .
\end{align}
Note that $\partial\bB /\partial\lambda=\partial\mathbf{A}/\partial\lambda$ and
\begin{equation}
1-\bB+\bB^2-\bB^3 + \cdots = (1+\bB)^{-1} = \bA^{-1} \, ,
\end{equation}
which proves Eq.~\eqref{eq:dertrlog1}.
To prove Eq.~\eqref{eq:dertrlog2} let $\lambda\equiv \mathbf{A}_{\ra \rb}$.
\begin{align}
 \frac{\partial}{\partial \mathbf{A}_{\ra \rb}}\tr\log \mathbf{A}
&=\tr\left[ \mathbf{A}^{-1} \frac{\partial \mathbf{A}}{\partial \mathbf{A}_{\ra \rb}} \right] , \\
&=\sum_{\rc\rd} \left(\mathbf{A}^{-1}\right)_{\rc\rd} \frac{\partial \mathbf{A}_{\rd\rc}}{\partial \mathbf{A}_{\ra \rb}} \, , \\
&=\left(\mathbf{A}^{-1}\right)_{\rb\ra} \, .
\end{align}
\end{proof}

\begin{theorem}
\label{thm:diffinv}
Let $\mathbf{A}$ be an invertible
matrix and $\lambda$ a parameter. Then:
\begin{align}
\label{eq:first}
 \frac{\partial \mathbf{A}^{-1}}{\partial
\lambda}=-\bA^{-1}\frac{\del \bA}{\del \lambda}\bA^{-1} \, .
\end{align}
In particular:
\begin{align}
\label{eq:second}
 \frac{\partial (\mathbf{A}^{-1})_{\ra \rb}}{\partial
\mathbf{A}_{\rlm \rn}}=-(\mathbf{A}^{-1})_{\ra\rlm}(\mathbf{A}^{-1})_{\rn\rb}
\, .
\end{align}
\end{theorem}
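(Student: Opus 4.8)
The plan is to differentiate the defining relation of the inverse. Write $\bA\,\bAtomo = 1$, where here and below $\bA=\bA(\lambda)$ is assumed to depend differentiably on $\lambda$ and the relation holds in a neighbourhood of the point of interest (invertibility being an open condition). Applying $\del/\del\lambda$ together with the Leibniz rule to both sides, and using that the identity is $\lambda$-independent, one obtains $\dfrac{\del\bA}{\del\lambda}\,\bAtomo + \bA\,\dfrac{\del\bAtomo}{\del\lambda} = 0$. Left-multiplying by $\bAtomo$ and solving for the remaining derivative isolates
\begin{align}
\frac{\del\bAtomo}{\del\lambda} = -\bAtomo\,\frac{\del\bA}{\del\lambda}\,\bAtomo\,,
\end{align}
which is precisely Eq.~\eqref{eq:first}.

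For the component identity Eq.~\eqref{eq:second} I would simply specialize $\lambda$ to a single matrix entry, $\lambda\equiv\bA_{\rlm\rn}$. With the differentiation convention fixed at the start of this appendix (independent variations of all entries), $\del\bA_{\rc\rd}/\del\bA_{\rlm\rn} = \delta_{\rc\rlm}\delta_{\rd\rn}$, so that $\del\bA/\del\bA_{\rlm\rn}$ is the elementary matrix carrying a single $1$ in row $\rlm$, column $\rn$. Substituting this into Eq.~\eqref{eq:first} and reading off the $(\ra,\rb)$ entry collapses the two matrix products:
\begin{align}
\frac{\del(\bAtomo)_{\ra\rb}}{\del\bA_{\rlm\rn}} = -\sum_{\rc\rd}(\bAtomo)_{\ra\rc}\,\delta_{\rc\rlm}\delta_{\rd\rn}\,(\bAtomo)_{\rd\rb} = -(\bAtomo)_{\ra\rlm}(\bAtomo)_{\rn\rb}\,,
\end{align}
as claimed.

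I do not expect a genuine obstacle here: the entire argument is a single application of the Leibniz rule followed by multiplication on the left and right by $\bAtomo$. The only point demanding care is a bookkeeping one already flagged in this appendix, namely consistency with the chosen convention for derivatives with respect to matrix entries. Had we instead allowed only symmetric variations, the elementary-matrix step would acquire an extra symmetrizing term and Eq.~\eqref{eq:second} would be modified accordingly; with the convention actually in force no such correction appears.
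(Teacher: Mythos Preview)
Your proof is correct and follows essentially the same approach as the paper: both differentiate an algebraic identity satisfied by the inverse via the Leibniz rule and then specialize $\lambda=\bA_{\rlm\rn}$ for the component version. The only cosmetic difference is the identity chosen---you differentiate $\bA\,\bAtomo=1$, while the paper differentiates $\bAtomo=\bAtomo\bA\,\bAtomo$ (yielding three terms, two of which reproduce the left-hand side)---and your version is if anything the cleaner of the two.
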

\begin{proof}
We write $\mathbf{A}^{-1}=\mathbf{A}^{-1}\mathbf{A} \, \mathbf{A}^{-1}$, and differentiate both sides with respect to $\lambda$.
Looking at an element of this matrix equation we have:
\begin{widetext}
\begin{align}
\frac{\partial (\mathbf{A}^{-1})_{\ra \rb}}{\partial \lambda}
&= \sum_{\rc\rd}\left(
\frac{\partial
  (\mathbf{A}^{-1})_{\ra\rc}}{\partial
  \lambda}\mathbf{A}_{\rc\rd}(\mathbf{A}^{-1})_{\rd\rb} +
  (\mathbf{A}^{-1})_{\ra\rc}\frac{\partial \mathbf{A}_{\rc\rd}}{\partial
  \lambda}(\mathbf{A}^{-1})_{\rd\rb} +
  (\mathbf{A}^{-1})_{\ra\rc}\mathbf{A}_{\rc\rd}\frac{\partial
  (\mathbf{A}^{-1})_{\rd\rb}}{\partial
  \lambda}   \right) ,\\
&= \frac{\partial (\mathbf{A}^{-1})_{\ra \rb}}{\partial
  \lambda} + \left(\mathbf{A}^{-1} \frac{\del
\bA}{\del \lambda}\mathbf{A}^{-1}\right)_{\!ab} +
  \frac{\partial (\mathbf{A}^{-1})_{\ra \rb}}{\partial \lambda} \, .
\end{align}
\end{widetext}
This proves Eq.~\eqref{eq:first}.
For the proof of Eq.~\eqref{eq:second} we set $\lambda=\bA_{\rlm\rn}$.
\end{proof}

A corollary of this theorem is the following identity valid for independent
matrices $\bA_k$:
\begin{align}
 \frac{\partial}{\partial \mathbf{A}_1}\tr\left[ \mathbf{A}_2 \,
\mathbf{A}_1^{-1}
\mathbf{A}_3 \right]=-\left( \mathbf{A}_1^{-1} \mathbf{A}_3 \, \mathbf{A}_2 \,
\mathbf{A}_1^{-1} \right)^{\rT} .
\end{align}

\section{N-QBM Partition Function}
\label{app:partitionfunction}

In this section we calculate the partition function of the N-QBM model. Our
treatment
mimics and generalizes that of Weiss \cite{Weiss93}, which treats one system
oscillator only
and does not allow for interactions among reservoir oscillators and non-diagonal
mass matrix.\footnote{Since a set of non-interacting oscillators can
represent the most general Gaussian thermal reservoir, considering a non-diagonal
mass matrix may appear superfluous. However we need the nondiagonal
elements to generate the correlation function of two different reservoir momenta by
partial differentiation of the partition function.}
The partition function has an imaginary-time path integral representation given by:
\begin{align}
 Z_\rC&=\oint \mathcal{D}\mathbf{x} \, \mathcal{D}\mathbf{X}
\exp\left( - S^{(\rE)}[\mathbf{x},\mathbf{X}]/\hbar \right) \, ,\\
S^{(\rE)} &= \int_0^{\hbar \beta}\rd  \tau \, L^{(\rE)}(\tau) \, ,\\
L^{(\rE)}(\tau)&=\frac{1}{2} \left( \dot{\mathbf{X}}^\mathrm{T} \mathbf{M}
\,
\dot{\mathbf{X}} + \mathbf{X}^\mathrm{T} \mathbf{C} \, \mathbf{X} \right) \\
\nonumber
& + \frac{1}{2} \left( \dot{\mathbf{x}}^\mathrm{T} \mathbf{m} \,
\dot{\mathbf{x}} + \left[ \mathbf{x} - \mathbf{c}^{-1} \mathbf{g} \, \mathbf{X}
\right]^\mathrm{T} \!\! \mathbf{c} \left[ \mathbf{x} - \mathbf{c}^{-1}
\mathbf{g} \, \mathbf{X} \right] \right) \, ,
\end{align}
where $S^{(\rE)}$ is the Euclidean action, $\tau$ the imaginary time and the
path
integral is over
 all periodic trajectories in the interval $[0,\hbar \beta]$.
This path integral is Gaussian and can be evaluated exactly. It is
convenient to represent the integration paths via their Fourier
 series, which takes care of the condition on periodicity.
 
\begin{align}
 \mathbf{x}(\tau)&= \sum_{\rr=-\infty}^{\infty} \mathbf{x}_\rr \, e^{\imath \nur
\tau} \, ,\\
 \mathbf{X}(\tau)&=\sum_{\rr=-\infty}^{\infty} \mathbf{X}_\rr \, e^{\imath \nur
\tau} \, ,
\end{align}
 where $\mathbf{x}_{-\rr}=\mathbf{x}_\rr^\dagger$,
$\mathbf{X}_{-\rr}=\mathbf{X}_\rr^\dagger$ (dagger stands for Hermitian
conjugation) since $\mathbf{x}(\tau)$ and
$\mathbf{X}(\tau)$ are real and $\nu_\rr \equiv 2\pi \rr/\hbar \beta$ are the
bosonic Matsubara frequencies.
 Written in terms of the Fourier coefficients the Euclidean action becomes:
 \\
 \\
 \\
\begin{widetext}
\begin{align}
S^{(\rE)}
&= \frac{\hbar \beta}{2} \sum_{\rr=-\infty}^{\infty} \left( \mathbf{X}_\rr^{\dagger} \, (\nur^2 \, \mathbf{M} +\mathbf{C}) \, \mathbf{X}_\rr \right)
+ \frac{\hbar \beta}{2} \sum_{\rr=-\infty}^{\infty}\left( \mathbf{x}_\rr^\dagger \, \nur^2 \, \mathbf{m} \, \mathbf{x}_\rr + \left[
\mathbf{x}_\rr - \mathbf{c}^{-1} \mathbf{g} \, \mathbf{X}_\rr \right]^\dagger
\!\! \mathbf{c} \left[ \mathbf{x}_\rr - \mathbf{c}^{-1} \mathbf{g} \,
\mathbf{X}_\rr \right] \right) \, .
\end{align}
\end{widetext}
Next we decompose $\mathbf{x}_\rr=\bar{\mathbf{x}}_\rr+\mathbf{y}_\rr$  where
\begin{equation}
\bar{\mathbf{x}}_\rr=(\nur^2 \,\mathbf{m}+\mathbf{c})^{-1}\mathbf{g} \, \mathbf{X}_\rr \, ,
\end{equation}
is  chosen such that $S^{(\rE)}$ does not have a term linear in
 $\mathbf{y}_\rr$. The action can be written as:
\begin{align}
S^{(\rE)}
&= S^{(\rE)}_{\mathrm{reservoir}}[\mathbf{y}]
+S^{(\rE)}_{\mathrm{system}}[\mathbf{X}] \, , \\
&=\frac{\hbar \beta}{2} \sum_{\rr=-\infty}^\infty \left(\mathbf{y}_\rr^{\dagger} \, (\mathbf{\nur^2 \, \mathbf{m}+c})\, \mathbf{y}_\rr  \right)\\
\nonumber &+\frac{\hbar \beta}{2}  \sum_{\rr=-\infty}^\infty \left(
\mathbf{X}_\rr^{\dagger} \, (\nur^2 \, \mathbf{M} +\mathbf{C}+2 \, \nur
\boldsymbol{\hat{\gamma}}(\nur)) \, \mathbf{X}_\rr \right) \, ,
\end{align}
where the damping kernel is given by
\begin{align}
\boldsymbol{\hat{\gamma}}(z)
=\frac{1}{2} \, \mathbf{g}^{\rT}\mathbf{m}^{-\frac{1}{2
}}\boldsymbol{\omega}^{-1}\frac{z}{\boldsymbol{\omega}^2+z^2}\boldsymbol{\omega
}^{ -1}
\mathbf { m}^{-\frac{1}{2}}\mathbf{g} \, ,
\end{align}
which is the Laplace transform of Eq.~\eqref{eq:dampkern}. The
partition function
of the closed system is given by:
\begin{align}
Z_\rC &=\mathcal{N} \int \prod_{\rr=-\infty}^{\infty}  \rd \mathbf{X}_\rr \exp\left(-S^{(\rE)}_{\mathrm{system}}[\mathbf{X}]/\hbar\right) \nonumber \\
& \times \int \prod_{\rr=-\infty}^{\infty} \rd \mathbf{y}_\rr  \exp\left(-S^{(\rE)}_{\mathrm{reservoir}}[\mathbf{y}]/\hbar\right) \, . \label{eq:Zint}
\end{align}
The normalization factor $\mathcal{N}$
 is yet unspecified because it is not easy to
determine
the measure of the path integral. $\mathcal{N}$ will be determined
indirectly at the final stage of this calculation
by considering the limiting case of no system-environment coupling.

The integrals in Eq.~\eqref{eq:Zint} are all Gaussian. Ignoring the
normalization for now the integration gives:
\begin{widetext}
\begin{align}
 Z_\rC
&\propto  \prod_{\rr=-\infty}^{\infty} \frac{1}{\sqrt{\det\left[\nur^2 \, \mathbf{m}+\mathbf{c} \right]}}
\frac{1}{\sqrt{\det\left[\nur^2 \, \mathbf{M}+\mathbf{C}+2 \, \nur \hat{\boldsymbol{\gamma}}(\nur) \right]}} \, , \\
 & \propto \frac{1}{\sqrt{\det[\mathbf{c}]}}
\frac{1}{\sqrt{\det[\mathbf{C}]}} \prod_{\rr=1}^{\infty}
\frac{1}{\det\left[\nur^2 \, \mathbf{m}+\mathbf{c} \right]}
\frac{1}{\det\left[\nur^2 \, \mathbf{M}+\mathbf{C}+2\, \nur
\hat{\boldsymbol{\gamma}}(\nur) \right]}\, .
\end{align}
\end{widetext}
In the second line we used the fact that the elements of the product
corresponding to positive and negative values of $\rr$ are identical to
restrict
the product to positive $\rr$ and pulled out the $\rr=0$ entry. To determine
the
normalization let us recall the partition function for a simple
harmonic oscillator:
\begin{align}
 Z_{\mathrm{1HO}}=\frac{1}{2 \sinh(\beta \hbar \omega/2)}=\frac{1}{\beta \hbar
\omega}
\prod_{\rr=1}^{\infty}\frac{\nur^2}{\omega^2+\nur^2} \, .
\end{align}
This naturally generalizes to N harmonic oscillators by:
\begin{align}
  Z_{\mathrm{NHO}}=\frac{1}{\det[ 2 \sinh(\beta \hbar
\boldsymbol{\omega}/2)]}=\frac{1}{\beta
\hbar \det[\boldsymbol{\omega}]}
\prod_{\rr=1}^{\infty}\frac{\nur^2}{\det[\boldsymbol{\omega}^2+\nur^2]} \, .
\end{align}
In the limit of no coupling we demand that the partition function be a product
of two partition functions of this form. This condition fixes the
normalization and the final answer is:
\begin{align}
 Z_\rC &= Z_\rE \times \det \!\left( \frac{1}{\beta\hbar\boldsymbol{\Omega}}\right) \\
&\times \prod_{\rr=1}^\infty \det\!\left( \frac{\nur^2}{\boldsymbol{\Omega}^2+\nur^2+2 \mathbf{M}^{-\frac{1}{2}}\nur\hat{\boldsymbol{\gamma}}(\nur)\mathbf{M}^{ -\frac { 1 } { 2 } } } \right) \, , \nonumber
\end{align}
where $Z_\rE=\tr \left[ \exp(-\beta \mathbf{H}_\mathrm{E}) \right]$
is the partition function
of reservoir oscillators without coupling to the system. Using the
definition\,\eqref{eq:Ghat}
the partition function can also be written as:
\begin{align}
\label{eq:PF}
 Z_\rC = Z_\rE \times \det\left(
\frac{1}{\hbar\beta\boldsymbol{\Omega}}\right)
\prod_{\rr=1}^\infty \det\left(\mathbf{M} \nur^2
\hat{\mathbf{G}}(\nur)\right) \, .
\end{align}

\section{Derivation of Eqs.~(\ref{eq:<XX>}-\ref{eq:<xx>})}
\label{app:singletime}

In this appendix we derive some of the results presented in Sec.~\ref{sec:singletimecorrelations}.
Angular bracket with the subscript C denotes expectation values in the closed system
thermal state. Expectation values in the uncorrelated state are denoted by
attaching the subscript E to the bracket. Note that the
damping kernel depends on
the environmental
variables and the coupling constants alone. There is no dependence on system
variables. Using Eq.~\eqref{eq:XXform} we calculate the single-time system
position-position correlation as:
\begin{widetext}
\begin{align}
\lbr \left(\mathbf{X} \, \mathbf{X}^{\rT}\right)_{\rA\rB}\rbr
&=\frac{1}{\beta}\frac{\partial}{\partial \mathbf{C}_{\rA\rB}}\tr \log \mathbf{C}
+\frac{2}{\beta}\frac{\partial}{\partial \mathbf{C}_{\rA \rB}}\sum_{r=1}^\infty \tr \log \left[ \mathbf{M}^{-1} \hat{\mathbf{G}}(\nur)^{-1}  \right] , \\
 &= \frac{1}{\beta}\left( \mathbf{C}^{-1}\right)_{\rA \rB}+\frac{2}{\beta}\sum_{r=1}^\infty\tr \left[ \left( \mathbf{M}^{-1} \hat{\mathbf{G}}(\nur)^{-1} \right)^{-1} \mathbf{M}^{-1} \frac{\partial \hat{\mathbf{G}}(\nur)^{-1}}{\partial \mathbf{C}_{\rA \rB}} \right] , \\
&=\frac{1}{\beta} \hat{\mathbf{G}}(\boldsymbol{\nu}_0)_{\rA \rB}+\frac{2}{\beta} \sum_{r=1}^\infty\hat{\mathbf{G}}(\nur)_{\rA \rB} \, , \\
\lbr \mathbf{X} \, \mathbf{X}^{\rT}\rbr
&=\frac{1}{\beta} \hat{\mathbf{G}}(\boldsymbol{\nu}_0)+\frac{2}{\beta} \sum_{r=1}^\infty\hat{\mathbf{G}}(\nur) \, ,
\end{align}
\end{widetext}
where we used the fact that $\mathbf{C}$ and $\hat{\mathbf{G}}(\nur)$ are symmetric matrices and $\bGhat(\bnu_0)=\bGhat(0)=\bC^{-1}$.
The system momentum-momentum correlations can be calculated in a similar way using Eq.~\eqref{eq:PPform}.
\begin{widetext}
\begin{align}
\lbr \left( \mathbf{P} \, \mathbf{P}^{\rT} \right)_{\rA \rB}\rbr
&=\frac{1}{\beta}\frac{\partial}{\partial
(\mathbf{M}^{-1})_{\rA \rB}}\tr \log (\mathbf{M}^{-1})+\frac{2}{\beta}\frac{\partial}{ \partial
(\mathbf{M}^{-1})_{\rA \rB}}\sum_{r=1}^\infty \left(\tr \log \left[
\mathbf{M}^{-1}\right]+\tr\log \left[ \hat{\mathbf{G}}(\nur)^{-1}
\right]\right) \, , \\
 &= \frac{\bM_{\rA\rB}}{\beta}+\frac{2}{\beta}\sum_{\rr=1}^{\infty}\left(
\bM_{\rA\rB} + \tr \left[ \bGhatnur \frac{\del \bGhatnur^{-1}}{\del
(\mathbf{M}^{-1})_{\rA \rB}} \right] \right) \, , \\
\lbr \mathbf{P} \, \mathbf{P}^{\rT} \rbr&=
\frac{\mathbf{M}}{\beta}+\frac{2}{\beta}\sum_{r=1}^\infty \left(
\mathbf{M}-\mathbf{M} \, \nur^2\hat{\mathbf{G}}(\nur) \, \mathbf{M} \right) \, .
\end{align}
\end{widetext}
We used Theorem~\ref{thm:TrLogPerm} in the first line. In the second
line we used Theorem~\ref{thm:dertrlog} for all terms and
Theorem~\ref{thm:diffinv} for the last term with
$\mathbf{A}_1,\mathbf{A}_2,\mathbf{A}_3 \
\rightarrow \mathbf{M}^{-1}, \nur^2\hat{\mathbf{G}}(\nur), \mathbf{1}$.

For the system-environment position correlations note that only the damping
kernel depends on the interaction matrix:
\begin{align}
\nonumber
\frac{1}{\beta}\frac{\partial \log Z_C}{\partial (\mathbf{g}^{\rT})_{\rA\ra}}
&=-\frac{1}{\beta}\sum_{\rr=1}^{\infty} \frac{\partial}{\partial (\mathbf{g}^{\rT})_{\rA\ra}} \tr \log [\hat{\mathbf{G}}(\nur)^{-1}] \, , \\
\label{eq:dellogdelg}
&= -\frac{2}{\beta}\sum_{\rr=1}^\infty
\sum_{\rB\rC}
\hat{\mathbf{G}}(\nur)_{\rB\rC}
\nur \frac{\partial \hat{\boldsymbol{\gamma}}(\nur)_{\rC\rB}}{\partial
(\mathbf{g}^{\rT})_{\rA\ra}} \, .
\end{align}
The partial derivative of the damping kernel can be calculated explicitly. For
this differentiation it is useful to rewrite $\bgammahatnur$ as:
\begin{align}
 2 \nur \bgammahatnur= \nur^2 \bgt \bctomo \left( \bmtomo+\nur^2
\bctomo\right)^{-1} \bctomo \bg
\end{align}
\\
For brevity of notation we define $\basr$ such that
$\bgammahatnur=\frac{1}{2}\bgt \nur \basr \, \bg$.
\begin{widetext}
\begin{align}
2 \nur \frac{\partial \hat{\boldsymbol{\gamma}}(\nur)_{\rC\rB}}{\partial (\mathbf{g}^{\rT})_{\rA\ra}} &=
\frac{\partial}{\partial \mathbf{g}_{\ra\rA}}\sum_{\re\rf}  \left( \mathbf{g}^{\rT} \right)_{\rC\re}\left(\nur^2 \basr\right)_{\re\rf}\mathbf{g}_{\rf\rB} \, , \\
&=\sum_{\re\rf}\bigg\{  \delta_{\re\ra} \delta_{\rC\rA}\left(\nur^2 \basr\right)_{\re\rf} \bg_{\rf\rB} +  \left( \mathbf{g}^{\rT} \right)_{\rC\re} \left(\nur^2\basr\right)_{\re\rf}\delta_{\rf\ra}\delta_{\rB\rA} \bigg\} , \\
&=  \left(\nur^2 \basr \, \bg \right)_{\ra\rB}\delta_{\rC\rA} +   \left(\bgt \nur^2 \basr \right)_{\rC\ra}\delta_{\rB\rA} \, .
\end{align}
\end{widetext}
Plugging this result in Eq.~\eqref{eq:dellogdelg} we get:
\begin{align}
\frac{1}{\beta}\frac{\partial
\log Z_C}{\partial
\mathbf{g}^{\rT}}=-\frac{2}{\beta}\rsum \nur \bGhatnur \, \bgammahatnur \,
\bg^{-1} \, .
\end{align}
Using this result in Eq.~\eqref{eq:Xxform} we get Eq.~\eqref{eq:<xX>}.

To derive Eq.~\eqref{eq:<pp>} we start from Eq.~\eqref{eq:ppform}:
\begin{align}
\label{eq:ppf}
 -\frac{2}{\beta} \frac{\del \log Z_C}{\del \bmtomo}
&= \left \langle \bp \, \bpt \right \rangle_{\rE} \\
& + \frac{2}{\beta} \rsum \frac{\del}{\del \bmtomo}\tr \log \left[ \bM^{-1} \bGhat^{-1}(\nur) \right] . \nonumber
\end{align}
Using Theorem\,\ref{thm:dertrlog} we get:
\begin{align}
\label{eq:tr[GdelG]}
 \frac{\del}{\del (\bmtomo)_{\ra \rb}}\tr \log \left[ \bM^{-1}
\bGhat^{-1}(\nur)
\right]=\tr\left[ \bGhatnur \frac{\del \bGhat^{-1}(\nur)}{\del
(\bmtomo)_{\ra \rb}} \right] \, ,
\end{align}
where
\begin{align}
 \frac{\del \bGhat^{-1}(\nur)}{\del (\bmtomo)_{\ra \rb}}=2 \nur \frac{\del
\bgammahatnur}{\del (\bmtomo)_{\ra \rb}} \, .
\end{align}

Next use Theorem~\ref{thm:diffinv} and plug the result back into Eq.~\eqref{eq:tr[GdelG]}:
\begin{widetext}
\begin{align}
2 \nur \frac{\del
\bgammahatnur}{\del (\bmtomo)_{\ra \rb}}&=
-\nur^2 \bgt \basr^{-1}  \frac{\del \basr}{\del (\blm^{-1})_{\ra\rb}}
\basr^{-1} \bg \, , \\
&=-\nur^2 \bgt \bctomo (\bmtomo+\nur^2
\bctomo)^{-1} \frac{\del(\bmtomo +\nur^2 \bctomo)}{\del
(\bmtomo)_{\ra \rb}}(\bmtomo+\nur^2
\bctomo)^{-1} \bctomo \bg \, , \\
 \tr\left[ \bGhatnur \frac{\del \bGhat^{-1}(\nur)}{\del(\bmtomo)_{\ra \rb}} \right]& =
\sum_{\rA\rB\rc\rd}\bGhatnur_{\rA \rB} \left(-\nur^2 \left( \bgt  \basr \, \bc \right)_{\rB\rc} \frac{\del (\bmtomo+\nur^2 \bctomo)_{\rc\rd}}{\del (\bmtomo)_{\ra \rb}} \left( \bc \, \basr \, \bg \right)_{\rd\rA} \right) \, .
\end{align}
\end{widetext}
\newpage
Observe that:
\begin{align}
 \frac{\del (\bmtomo+\nur^2 \bctomo)_{\rc\rd}}{\del
(\bmtomo)_{\ra \rb}}=\delta_{\rc\ra}\delta_{\rd\rb} \, .
\end{align}
\\
\\
It follows that
\begin{align}
& \tr\left[ \bGhatnur \frac{\del \bGhat^{-1}(\nur)}{\del (\bmtomo)_{\ra \rb}} \right] \nonumber \\
&=- \sum_{\rA \rB} \nur^2\bGhatnur_{\rA \rB} \left( \bgt  \basr \bc \right)_{\rB\ra} \left(\bc \, \basr \, \bg\right)_{\rb\rA} \, ,\nonumber\\
&=- \left( \bc \, \basr \, \bg \, \nur^2\bGhatnur \, \bgt  \basr \, \bc \right)_{\rb\ra} \, .
\end{align}
In the last step we used the fact that both $\bGhatnur$ and $\basr$ are
symmetric matrices. Eq.~\eqref{eq:tr[GdelG]} becomes:
\begin{align}
\frac{\del}{\del (\bmtomo)} & \tr \log \! \left[ \bM^{-1} \bGhat^{-1}(\nur) \right] \nonumber \\
&=-\bc \, \basr \, \bg \, \nur^2 \bGhatnur \bgt \basr \, \bc \, .
\end{align}
We plug this into Eq.~\eqref{eq:ppf} and note that $\bc \, \basr = \blm \,
\bfhatnur$ to
get Eq.~\eqref{eq:<pp>}.

The derivation of Eq.~\eqref{eq:<xx>} is almost identical to that of
Eq.~\eqref{eq:<pp>} but with more terms. We do not show the details of that
derivation here.

\section{Proof of conclusions of Sec.~\ref{sec:generalequivalence}}
\label{app:multitime}

Using the fact that all position-momentum correlations vanish we get:
\begin{align}
 \lbr \bxi(s) \, \bXtz \rbr &= \bgt  \bfdot(s) \, \blm \lbr \bxz \, \bXtz \rbr \,
, \\
 \lbr \bxi(s) \, \bPtz \rbr &= \bgt  \blf(s) \lbr \bpz \, \bPtz \rbr \, ,
\end{align}
where the expectation values on the RHS are given by
Eqs.~(\ref{eq:<xX>},\ref{eq:<pP>}).

\begin{align}
\label{eq:last1}
 \lbr \bxi(s) \, \bXtz \rbr
&= \bgt \bmtomoot \frac{\cosomega s)}{\bomega^2 } \bmtomoot \bg \lbr \bXz \bXtz \rbr \\
&+ \frac{2}{\beta} \rsum \bgt \bmtomoot \frac{\cosomega s) \, \nur^2 }{\bomega^2 (\bomega^2 +\nur^2)} \bmtomoot \bg \, \bGhatnur \, . \nonumber
\end{align}
The first term on the right-hand side can be seen to decay by the fact that
\begin{equation}
\bgt \bmtomoot \frac{\cosomega s)}{\bomega^2 } \bmtomoot \bg = 2 \bgamma(s) \, .
\end{equation}
The second term can be seen to decay by noting the inequality
\begin{align}
\label{eq:kernelordering}
\nonumber
\bgt \bmtomoot \frac{\cosomega s) \, \nur^2}{\bomega^2 (\bomega^2+\nur^2)}\bmtomoot \bg
& \leq \bgt \bmtomoot \frac{\cosomega s)}{\bomega^2 }\bmtomoot \bg \, , \\
& \leq  2 \bgamma(s) \, ,
\end{align}
in the sense of positive-definite matrix kernels,
since both $\bomega^2$ and $(\bomega^2+\nur^2)$ are positive matrices and cosine is a positive-definite kernel.
The summation over $r$ in Eq.~\eqref{eq:last1} is finite as can be seen from
Eq.~\eqref{eq:<XX>}. As a result $\lbr \bxi(s) \, \bXtz \rbr$ is a function
that decays over time like $\bgamma(s)$. When we take the convolution of this
with another decaying function $\bGhat(t-s)$ and let $t\rightarrow \infty$ the
overlap goes to zero. This way we argue that second line of
Eq.~\eqref{eq:<X(t)X(t')>} vanishes. A similar calculation establishes the same
goes for the third line.
\begin{align}
&\lbr \bxi(s) \, \bPtz \rbr \nonumber \\
&= \bgt  \blf(s) \, \bc \lbr \bxz \bXtz \rbr \bM +\bgt \blf(s) \, \bg \lbr \bXz \bXtz \rbr \bM \, ,\\
&= \frac{1}{\beta} \rsum \bgt  \blf(s) \, \bc \, \bmtomoot \frac{\nur^2}{\bomega^2 (\bomega^2+\nur^2)} \bmtomoot \bg \, \bGhatnur \, \bM \, , \\
&= \frac{1}{\beta} \rsum \bgt \bmtomoot \frac{\sinomega s) \, \nur^2}{\bomega (\bomega^2+\nur^2)}\bmtomoot \bg \, \bGhatnur \, \bM \, , \\
&= - \frac{1}{\beta} \rsum \frac{\rd}{\rd s} \left[ \bgt \bmtomoot \frac{\cosomega s) \, \nur^2 }{\bomega^2 (\bomega^2+\nur^2)}\bmtomoot \bg \right] \, \bGhatnur \, \bM \, .
\end{align}
The term inside square brackets decays as $\bgamma(s)$ as can be seen
from Eq.~\eqref{eq:kernelordering} and the argument following it.
The summation over $r$ is finite as before. Hence $ \lbr \bxi(s) \, \bPtz \rbr$ decays over time like $\bgammadot(s)$.
The convolution of this with another decaying function $\bG(t-\tau)$ gives zero in the
limit $t\rightarrow \infty$.

The second and third lines of Eq.~\eqref{eq:<X(t)X(t')>} are zero for the
uncorrelated initial state as well. This follows trivially from: $\left \langle
\bxi(s) \,
\bXtz
\right \rangle_\rE=\left \langle \bxi(s) \, \bPtz
\right \rangle_\rE=0$.

Finally we need to show that the fourth line of Eq.~\eqref{eq:<X(t)X(t')>}
is the same for both cases. This requires showing that the late-time limit of
the noise kernel is the same. We know that the noise kernel is stationary for
the uncorrelated initial state. Let us focus on the noise kernel of the closed system
thermal state.
\begin{align}
\label{eq:<xixi>}
& \lbr \bxi(s) \, \bxi(s')^\rT \rbr = \\
&\bgt \left( \bfdot(s) \, \blm \lbr \bxz \bxtz \rbr \blm \, \bfdot(s') + \blf(s) \lbr \bpz \bptz \rbr \blf(s') \right) \bg \, . \nonumber
\end{align}
We use Eqs.~(\ref{eq:<pp>},\ref{eq:<xx>}) on the RHS.
The derivation is straightforward but tedious. The theorems
in App.~\ref{app:matrixtheorems} are
utilized repeatedly.

The uncorrelated noise kernel is obtained if only the first terms in
Eqs.~(\ref{eq:<pp>},\ref{eq:<xx>}) are kept and the rest ignored.
Hence we need to show that all the other terms vanish in the late-time limit.
The strategy is the same as before: we show that these terms are bounded by a
function proportional to the damping kernel or its derivatives. We work out
the details for two terms explicitly.

First consider the term in the noise kernel Eq.~\eqref{eq:<xixi>} due to the
second term in Eq.~\eqref{eq:<xx>}.
\begin{widetext}
\begin{align}
\bgt \bfdot(s) & \, \blm \, \bctomo \bg \lbr \bXz \bXtz \rbr \bgt \bctomo \blm \, \bfdot(s') \, \bg \nonumber \\
&= \bgt \bmtomoot \cosomega s) \, \bmtomoot \blm \, \bmtomoot
\bomega^{-2} \bmtomoot \bg \lbr \bXz \bXtz \rbr \bgt \bmtomoot \bomega^{-2}
\bmtomoot \blm \, \bmtomoot \cosomega s') \, \bmtomoot \bg \, ,\\
&= \bgt \bmtomoot \frac{\cosomega s)}{\bomega^2} \bmtomoot \bg \lbr \bXz \bXtz
\rbr \bgt \bmtomoot \frac{\cosomega s')}{\bomega^2} \bmtomoot \bg \, , \\
&= 4 \bgamma(s) \lbr \bXz \bXtz \rbr \bgamma(s') \, .
\end{align}
\end{widetext}
Unlike previous cases we were able to express this term exactly in terms of
the damping kernel. It is a decaying function in both $s$ and $s'$ variables.
The convolution of $\bgamma(s)$ with $\bGhat(t-s)$ in
Eq.~\eqref{eq:<X(t)X(t')>} goes to zero if we let $t\rightarrow \infty$.
Similarly the overlap of $\bgamma(s')$ with $\bGhat(t'-s')$ vanishes in the
limit $t' \rightarrow \infty$.

Secondly consider the term in the noise kernel Eq.~\eqref{eq:<xixi>} due to the
third term in Eq.~\eqref{eq:<xx>}.
\begin{widetext}
\begin{align}
\nonumber
-\frac{2}{\beta} \rsum & \bgt \bfdot(s) \, \blm \,  \bmtomoot
\frac{1}{\bomega^2(\bomega^2+\nur^2)} \bmtomoot
\bg \, \nur^2 \bGhatnur \,
\bgt \bmtomoot \frac{1}{\bomega^2(\bomega^2+\nur^2)} \bomega^2 \bmtomoot
\blm \, \bfdot(s') \, \bg \\
&= -\frac{2}{\beta} \rsum  \bgt \bmtomoot \frac{\cosomega s)}{\bomega^2
(\bomega^2 + \nur^2)} \bmtomoot
\bg \, \nur^2 \bGhatnur \,
\bgt \bmtomoot \frac{ \cosomega s')}{\bomega^2+\nur^2}
\bmtomoot  \bg \, , \\
&= \frac{2}{\beta} \rsum \left[ \bgt \bmtomoot \frac{\cosomega s) \,
\nur^2 }{\bomega^2
(\bomega^2 + \nur^2)} \bmtomoot
\bg \right] \, \frac{\bGhatnur}{\nur^2} \frac{\rd^2}{\rd s'^2} \, \left[
\bgt \bmtomoot \frac{ \cosomega s') \, \nur^2 }{\bomega^2(\bomega^2+\nur^2)}
\bmtomoot  \bg \right]\, .
\end{align}
\end{widetext}
As before we conclude that the terms in square brackets decay like the damping
kernel. The summation over $r$ is finite as can be seen from
Eq.~\eqref{eq:<XX>} and noting that $\nu_r > 1$ for all positive $r$.

Close inspection of all the other terms in Eq.~\eqref{eq:<xixi>} reveals
that
they have roughly the same form as those we worked out the details explicitly.
All
these terms vanish in the late-time limit.

This proves the equivalence of the late-time limit of the uncorrelated initial
state to that of the late-time limit of the closed system thermal state.
Since the closed system thermal state is stationary our proof is complete.

\section*{Acknowledgement}

One of the authors (YS) would like to thank M. E. Fisher and C. Jarzynski for useful discussions. YS, CHF and BLH were supported in part by NSF grants PHY-0801368 to the University of Maryland. JMT and CHF were supported in part by the NSF Physics Frontier Center at the JQI.
This work was finished while BLH was visiting the Institute for Advanced Study of the Hong Kong University of Science and Technology.

\bibliography{bib}{}

\begin{thebibliography}{38}%
\makeatletter
\providecommand \@ifxundefined [1]{%
 \@ifx{#1\undefined}
}%
\providecommand \@ifnum [1]{%
 \ifnum #1\expandafter \@firstoftwo
 \else \expandafter \@secondoftwo
 \fi
}%
\providecommand \@ifx [1]{%
 \ifx #1\expandafter \@firstoftwo
 \else \expandafter \@secondoftwo
 \fi
}%
\providecommand \natexlab [1]{#1}%
\providecommand \enquote  [1]{``#1''}%
\providecommand \bibnamefont  [1]{#1}%
\providecommand \bibfnamefont [1]{#1}%
\providecommand \citenamefont [1]{#1}%
\providecommand \href@noop [0]{\@secondoftwo}%
\providecommand \href [0]{\begingroup \@sanitize@url \@href}%
\providecommand \@href[1]{\@@startlink{#1}\@@href}%
\providecommand \@@href[1]{\endgroup#1\@@endlink}%
\providecommand \@sanitize@url [0]{\catcode `\\12\catcode `\$12\catcode
  `\&12\catcode `\#12\catcode `\^12\catcode `\_12\catcode `\%12\relax}%
\providecommand \@@startlink[1]{}%
\providecommand \@@endlink[0]{}%
\providecommand \url  [0]{\begingroup\@sanitize@url \@url }%
\providecommand \@url [1]{\endgroup\@href {#1}{\urlprefix }}%
\providecommand \urlprefix  [0]{URL }%
\providecommand \Eprint [0]{\href }%
\providecommand \doibase [0]{http://dx.doi.org/}%
\providecommand \selectlanguage [0]{\@gobble}%
\providecommand \bibinfo  [0]{\@secondoftwo}%
\providecommand \bibfield  [0]{\@secondoftwo}%
\providecommand \translation [1]{[#1]}%
\providecommand \BibitemOpen [0]{}%
\providecommand \bibitemStop [0]{}%
\providecommand \bibitemNoStop [0]{.\EOS\space}%
\providecommand \EOS [0]{\spacefactor3000\relax}%
\providecommand \BibitemShut  [1]{\csname bibitem#1\endcsname}%
\let\auto@bib@innerbib\@empty
\bibitem [{\citenamefont {Popescu}\ \emph {et~al.}(2006)\citenamefont
  {Popescu}, \citenamefont {Short},\ and\ \citenamefont {Winter}}]{Popescu06}%
  \BibitemOpen
  \bibfield  {author} {\bibinfo {author} {\bibfnamefont {S.}~\bibnamefont
  {Popescu}}, \bibinfo {author} {\bibfnamefont {A.~J.}\ \bibnamefont {Short}},
  \ and\ \bibinfo {author} {\bibfnamefont {A.}~\bibnamefont {Winter}},\ }\href
  {\doibase 10.1038/nphys444} {\bibfield  {journal} {\bibinfo  {journal} {Nat.
  Phys.}\ }\textbf {\bibinfo {volume} {2}},\ \bibinfo {pages} {754} (\bibinfo
  {year} {2006})}\BibitemShut {NoStop}%
\bibitem [{\citenamefont {Goldstein}\ \emph {et~al.}(2006)\citenamefont
  {Goldstein}, \citenamefont {Lebowitz}, \citenamefont {Tumulka},\ and\
  \citenamefont {Zangh\`\i}}]{Goldstein06}%
  \BibitemOpen
  \bibfield  {author} {\bibinfo {author} {\bibfnamefont {S.}~\bibnamefont
  {Goldstein}}, \bibinfo {author} {\bibfnamefont {J.~L.}\ \bibnamefont
  {Lebowitz}}, \bibinfo {author} {\bibfnamefont {R.}~\bibnamefont {Tumulka}}, \
  and\ \bibinfo {author} {\bibfnamefont {N.}~\bibnamefont {Zangh\`\i}},\ }\href
  {\doibase 10.1103/PhysRevLett.96.050403} {\bibfield  {journal} {\bibinfo
  {journal} {Phys. Rev. Lett.}\ }\textbf {\bibinfo {volume} {96}},\ \bibinfo
  {pages} {050403} (\bibinfo {year} {2006})}\BibitemShut {NoStop}%
\bibitem [{\citenamefont {van Hove}(1954)}]{vanHove55}%
  \BibitemOpen
  \bibfield  {author} {\bibinfo {author} {\bibfnamefont {L.}~\bibnamefont {van
  Hove}},\ }\href {\doibase 10.1016/S0031-8914(54)92646-4} {\bibfield
  {journal} {\bibinfo  {journal} {Physica}\ }\textbf {\bibinfo {volume} {21}},\
  \bibinfo {pages} {517 } (\bibinfo {year} {1954})}\BibitemShut {NoStop}%
\bibitem [{\citenamefont {Davies}(1974)}]{Davies74}%
  \BibitemOpen
  \bibfield  {author} {\bibinfo {author} {\bibfnamefont {E.~B.}\ \bibnamefont
  {Davies}},\ }\href {\doibase 10.1007/BF01608389} {\bibfield  {journal}
  {\bibinfo  {journal} {Comm. Math. Phys.}\ }\textbf {\bibinfo {volume} {39}},\
  \bibinfo {pages} {91} (\bibinfo {year} {1974})}\BibitemShut {NoStop}%
\bibitem [{\citenamefont {Davies}(1976{\natexlab{a}})}]{Davies76a}%
  \BibitemOpen
  \bibfield  {author} {\bibinfo {author} {\bibfnamefont {E.~B.}\ \bibnamefont
  {Davies}},\ }\href {\doibase 10.1007/BF01351898} {\bibfield  {journal}
  {\bibinfo  {journal} {Math. Ann.}\ }\textbf {\bibinfo {volume} {219}},\
  \bibinfo {pages} {147} (\bibinfo {year} {1976}{\natexlab{a}})}\BibitemShut
  {NoStop}%
\bibitem [{\citenamefont {Davies}(1976{\natexlab{b}})}]{Davies76}%
  \BibitemOpen
  \bibfield  {author} {\bibinfo {author} {\bibfnamefont {E.~B.}\ \bibnamefont
  {Davies}},\ }\href@noop {} {\emph {\bibinfo {title} {Quantum theory of open
  systems}}}\ (\bibinfo  {publisher} {Academic Press},\ \bibinfo {address}
  {London},\ \bibinfo {year} {1976})\BibitemShut {NoStop}%
\bibitem [{\citenamefont {Calzetta}\ and\ \citenamefont
  {Hu}(2008)}]{CalzettaHu08}%
  \BibitemOpen
  \bibfield  {author} {\bibinfo {author} {\bibfnamefont {E.~A.}\ \bibnamefont
  {Calzetta}}\ and\ \bibinfo {author} {\bibfnamefont {B.~L.}\ \bibnamefont
  {Hu}},\ }\href {\doibase 10.1017/CBO9780511535123} {\emph {\bibinfo {title}
  {Nonequilibrium Quantum Field Theory}}}\ (\bibinfo  {publisher} {Cambridge
  University Press},\ \bibinfo {address} {Cambridge},\ \bibinfo {year}
  {2008})\BibitemShut {NoStop}%
\bibitem [{\citenamefont {Balescu}(1997)}]{Balescu97}%
  \BibitemOpen
  \bibfield  {author} {\bibinfo {author} {\bibfnamefont {R.}~\bibnamefont
  {Balescu}},\ }\href@noop {} {\emph {\bibinfo {title} {Statistical Dynamics:
  Matter out of Equilibrium}}}\ (\bibinfo  {publisher} {Imperial College
  Press},\ \bibinfo {address} {London},\ \bibinfo {year} {1997})\BibitemShut
  {NoStop}%
\bibitem [{\citenamefont {Linden}\ \emph {et~al.}(2009)\citenamefont {Linden},
  \citenamefont {Popescu}, \citenamefont {Short},\ and\ \citenamefont
  {Winter}}]{Linden09}%
  \BibitemOpen
  \bibfield  {author} {\bibinfo {author} {\bibfnamefont {N.}~\bibnamefont
  {Linden}}, \bibinfo {author} {\bibfnamefont {S.}~\bibnamefont {Popescu}},
  \bibinfo {author} {\bibfnamefont {A.~J.}\ \bibnamefont {Short}}, \ and\
  \bibinfo {author} {\bibfnamefont {A.}~\bibnamefont {Winter}},\ }\href
  {\doibase 10.1103/PhysRevE.79.061103} {\bibfield  {journal} {\bibinfo
  {journal} {Phys. Rev. E}\ }\textbf {\bibinfo {volume} {79}},\ \bibinfo
  {pages} {061103} (\bibinfo {year} {2009})}\BibitemShut {NoStop}%
\bibitem [{\citenamefont {Reimann}(2010)}]{Reimann10}%
  \BibitemOpen
  \bibfield  {author} {\bibinfo {author} {\bibfnamefont {P.}~\bibnamefont
  {Reimann}},\ }\href {http://stacks.iop.org/1367-2630/12/i=5/a=055027}
  {\bibfield  {journal} {\bibinfo  {journal} {New Journal of Physics}\ }\textbf
  {\bibinfo {volume} {12}},\ \bibinfo {pages} {055027} (\bibinfo {year}
  {2010})}\BibitemShut {NoStop}%
\bibitem [{\citenamefont {Short}(2011)}]{Short11}%
  \BibitemOpen
  \bibfield  {author} {\bibinfo {author} {\bibfnamefont {A.~J.}\ \bibnamefont
  {Short}},\ }\href {http://stacks.iop.org/1367-2630/13/i=5/a=053009}
  {\bibfield  {journal} {\bibinfo  {journal} {New J. Phys.}\ }\textbf {\bibinfo
  {volume} {13}} (\bibinfo {year} {2011})}\BibitemShut {NoStop}%
\bibitem [{\citenamefont {Short}\ and\ \citenamefont
  {Farrelly}(2012)}]{Short12}%
  \BibitemOpen
  \bibfield  {author} {\bibinfo {author} {\bibfnamefont {A.~J.}\ \bibnamefont
  {Short}}\ and\ \bibinfo {author} {\bibfnamefont {T.~C.}\ \bibnamefont
  {Farrelly}},\ }\href {http://stacks.iop.org/1367-2630/14/i=1/a=013063}
  {\bibfield  {journal} {\bibinfo  {journal} {New Journal of Physics}\ }\textbf
  {\bibinfo {volume} {14}},\ \bibinfo {pages} {013063} (\bibinfo {year}
  {2012})}\BibitemShut {NoStop}%
\bibitem [{\citenamefont {Feynman}\ and\ \citenamefont
  {Vernon}(1963)}]{Feynman63}%
  \BibitemOpen
  \bibfield  {author} {\bibinfo {author} {\bibfnamefont {R.~P.}\ \bibnamefont
  {Feynman}}\ and\ \bibinfo {author} {\bibfnamefont {F.~L.}\ \bibnamefont
  {Vernon}},\ }\href {\doibase DOI:10.1016/0003-4916(63)90068-X} {\bibfield
  {journal} {\bibinfo  {journal} {Ann. Phys.}\ }\textbf {\bibinfo {volume}
  {24}},\ \bibinfo {pages} {118 } (\bibinfo {year} {1963})}\BibitemShut
  {NoStop}%
\bibitem [{\citenamefont {Caldeira}\ and\ \citenamefont
  {Leggett}(1983)}]{CaldeiraLeggett83}%
  \BibitemOpen
  \bibfield  {author} {\bibinfo {author} {\bibfnamefont {A.~O.}\ \bibnamefont
  {Caldeira}}\ and\ \bibinfo {author} {\bibfnamefont {A.~J.}\ \bibnamefont
  {Leggett}},\ }\href {\doibase DOI:10.1016/0378-4371(83)90013-4} {\bibfield
  {journal} {\bibinfo  {journal} {Physica A}\ }\textbf {\bibinfo {volume}
  {121}},\ \bibinfo {pages} {587 } (\bibinfo {year} {1983})}\BibitemShut
  {NoStop}%
\bibitem [{\citenamefont {Hu}\ \emph {et~al.}(1992)\citenamefont {Hu},
  \citenamefont {Paz},\ and\ \citenamefont {Zhang}}]{HPZ92}%
  \BibitemOpen
  \bibfield  {author} {\bibinfo {author} {\bibfnamefont {B.~L.}\ \bibnamefont
  {Hu}}, \bibinfo {author} {\bibfnamefont {J.~P.}\ \bibnamefont {Paz}}, \ and\
  \bibinfo {author} {\bibfnamefont {Y.}~\bibnamefont {Zhang}},\ }\href
  {\doibase 10.1103/PhysRevD.45.2843} {\bibfield  {journal} {\bibinfo
  {journal} {Phys. Rev. D}\ }\textbf {\bibinfo {volume} {45}},\ \bibinfo
  {pages} {2843} (\bibinfo {year} {1992})}\BibitemShut {NoStop}%
\bibitem [{\citenamefont {Weiss}(1993)}]{Weiss93}%
  \BibitemOpen
  \bibfield  {author} {\bibinfo {author} {\bibfnamefont {U.}~\bibnamefont
  {Weiss}},\ }\href@noop {} {\emph {\bibinfo {title} {Quantum Dissipative
  Systems}}}\ (\bibinfo  {publisher} {World Scientific},\ \bibinfo {address}
  {Singapore},\ \bibinfo {year} {1993})\BibitemShut {NoStop}%
\bibitem [{\citenamefont {Breuer}\ and\ \citenamefont
  {Petruccione}(2002)}]{Breuer02}%
  \BibitemOpen
  \bibfield  {author} {\bibinfo {author} {\bibfnamefont {H.~P.}\ \bibnamefont
  {Breuer}}\ and\ \bibinfo {author} {\bibfnamefont {F.}~\bibnamefont
  {Petruccione}},\ }\href {\doibase 10.1093/acprof:oso/9780199213900.001.0001}
  {\emph {\bibinfo {title} {The Theory of Open Quantum Systems}}}\ (\bibinfo
  {publisher} {Oxford University Press},\ \bibinfo {address} {New York},\
  \bibinfo {year} {2002})\BibitemShut {NoStop}%
\bibitem [{\citenamefont {Fleming}\ \emph
  {et~al.}(2010{\natexlab{a}})\citenamefont {Fleming}, \citenamefont {Hu},\
  and\ \citenamefont {Roura}}]{FDR}%
  \BibitemOpen
  \bibfield  {author} {\bibinfo {author} {\bibfnamefont {C.~H.}\ \bibnamefont
  {Fleming}}, \bibinfo {author} {\bibfnamefont {B.~L.}\ \bibnamefont {Hu}}, \
  and\ \bibinfo {author} {\bibfnamefont {A.}~\bibnamefont {Roura}},\
  }\href@noop {} {\enquote {\bibinfo {title} {Non-equilibrium
  fluctuation-dissipation inequality, and non-equilibrium uncertainty
  principle},}\ } (\bibinfo {year} {2010}{\natexlab{a}}),\ \Eprint
  {http://arxiv.org/abs/1012.0681} {arXiv:1012.0681 [quant-ph]} \BibitemShut
  {NoStop}%
\bibitem [{\citenamefont {Geva}\ \emph {et~al.}(2000)\citenamefont {Geva},
  \citenamefont {Rosenman},\ and\ \citenamefont {Tannor}}]{Geva00}%
  \BibitemOpen
  \bibfield  {author} {\bibinfo {author} {\bibfnamefont {E.}~\bibnamefont
  {Geva}}, \bibinfo {author} {\bibfnamefont {E.}~\bibnamefont {Rosenman}}, \
  and\ \bibinfo {author} {\bibfnamefont {D.}~\bibnamefont {Tannor}},\ }\href
  {\doibase 10.1063/1.481928} {\bibfield  {journal} {\bibinfo  {journal} {J.
  Chem. Phys.}\ }\textbf {\bibinfo {volume} {113}},\ \bibinfo {pages} {1380}
  (\bibinfo {year} {2000})}\BibitemShut {NoStop}%
\bibitem [{\citenamefont {Tasaki}\ \emph {et~al.}(2007)\citenamefont {Tasaki},
  \citenamefont {Yuasa}, \citenamefont {Facchi}, \citenamefont {Kimura},
  \citenamefont {Nakazato}, \citenamefont {Ohba},\ and\ \citenamefont
  {Pascazio}}]{Tasaki07}%
  \BibitemOpen
  \bibfield  {author} {\bibinfo {author} {\bibfnamefont {S.}~\bibnamefont
  {Tasaki}}, \bibinfo {author} {\bibfnamefont {K.}~\bibnamefont {Yuasa}},
  \bibinfo {author} {\bibfnamefont {P.}~\bibnamefont {Facchi}}, \bibinfo
  {author} {\bibfnamefont {G.}~\bibnamefont {Kimura}}, \bibinfo {author}
  {\bibfnamefont {H.}~\bibnamefont {Nakazato}}, \bibinfo {author}
  {\bibfnamefont {I.}~\bibnamefont {Ohba}}, \ and\ \bibinfo {author}
  {\bibfnamefont {S.}~\bibnamefont {Pascazio}},\ }\href {\doibase
  10.1016/j.aop.2006.06.004} {\bibfield  {journal} {\bibinfo  {journal} {Ann.
  Phys.}\ }\textbf {\bibinfo {volume} {322}},\ \bibinfo {pages} {631} (\bibinfo
  {year} {2007})}\BibitemShut {NoStop}%
\bibitem [{\citenamefont {Fleming}\ \emph
  {et~al.}(2011{\natexlab{a}})\citenamefont {Fleming}, \citenamefont {Roura},\
  and\ \citenamefont {Hu}}]{Correlations}%
  \BibitemOpen
  \bibfield  {author} {\bibinfo {author} {\bibfnamefont {C.~H.}\ \bibnamefont
  {Fleming}}, \bibinfo {author} {\bibfnamefont {A.}~\bibnamefont {Roura}}, \
  and\ \bibinfo {author} {\bibfnamefont {B.~L.}\ \bibnamefont {Hu}},\ }\href
  {\doibase 10.1103/PhysRevE.84.021106} {\bibfield  {journal} {\bibinfo
  {journal} {Phys. Rev. E}\ }\textbf {\bibinfo {volume} {84}},\ \bibinfo
  {pages} {021106} (\bibinfo {year} {2011}{\natexlab{a}})}\BibitemShut
  {NoStop}%
\bibitem [{\citenamefont {Fleming}\ and\ \citenamefont {Hu}()}]{QOS}%
  \BibitemOpen
  \bibfield  {author} {\bibinfo {author} {\bibfnamefont {C.~H.}\ \bibnamefont
  {Fleming}}\ and\ \bibinfo {author} {\bibfnamefont {B.~L.}\ \bibnamefont
  {Hu}},\ }\href@noop {} {\enquote {\bibinfo {title} {Non-{M}arkovian dynamics
  of open quantum systems: Stochastic equations and their perturbative
  solutions},}\ }\BibitemShut {NoStop}%
\bibitem [{\citenamefont {Mori}\ and\ \citenamefont
  {Miyashita}(2008)}]{Mori08}%
  \BibitemOpen
  \bibfield  {author} {\bibinfo {author} {\bibfnamefont {T.}~\bibnamefont
  {Mori}}\ and\ \bibinfo {author} {\bibfnamefont {S.}~\bibnamefont
  {Miyashita}},\ }\href {\doibase 10.1143/JPSJ.77.124005} {\bibfield  {journal}
  {\bibinfo  {journal} {J. Phys. Soc. Jap.}\ }\textbf {\bibinfo {volume}
  {77}},\ \bibinfo {pages} {124005} (\bibinfo {year} {2008})}\BibitemShut
  {NoStop}%
\bibitem [{\citenamefont {Swain}(1981)}]{Swain81}%
  \BibitemOpen
  \bibfield  {author} {\bibinfo {author} {\bibfnamefont {S.}~\bibnamefont
  {Swain}},\ }\href {http://stacks.iop.org/0305-4470/14/i=10/a=013} {\bibfield
  {journal} {\bibinfo  {journal} {J. Phys. A}\ }\textbf {\bibinfo {volume}
  {14}},\ \bibinfo {pages} {2577} (\bibinfo {year} {1981})}\BibitemShut
  {NoStop}%
\bibitem [{\citenamefont {Fleming}\ \emph
  {et~al.}(2011{\natexlab{b}})\citenamefont {Fleming}, \citenamefont {Roura},\
  and\ \citenamefont {Hu}}]{NQBM}%
  \BibitemOpen
  \bibfield  {author} {\bibinfo {author} {\bibfnamefont {C.~H.}\ \bibnamefont
  {Fleming}}, \bibinfo {author} {\bibfnamefont {A.}~\bibnamefont {Roura}}, \
  and\ \bibinfo {author} {\bibfnamefont {B.~L.}\ \bibnamefont {Hu}},\
  }\href@noop {} {\enquote {\bibinfo {title} {Quantum brownian motion of
  multipartite systems with entanglement dynamics},}\ } (\bibinfo {year}
  {2011}{\natexlab{b}}),\ \Eprint {http://arxiv.org/abs/1106.5752}
  {arXiv:1106.5752 [quant-ph]} \BibitemShut {NoStop}%
\bibitem [{\citenamefont {Fleming}()}]{Fleming11T}%
  \BibitemOpen
  \bibfield  {author} {\bibinfo {author} {\bibfnamefont {C.~H.}\ \bibnamefont
  {Fleming}},\ }\href {http://hdl.handle.net/1903/11708} {\enquote {\bibinfo
  {title} {Non-{M}arkovian dynamics of open quantum systems},}\ }\bibinfo
  {note} {{P}h.{D}. thesis, University of Maryland College Park
  (2011)}\BibitemShut {NoStop}%
\bibitem [{\citenamefont {Ford}\ \emph {et~al.}(1988)\citenamefont {Ford},
  \citenamefont {Lewis},\ and\ \citenamefont {O'Connell}}]{FordOconnell88}%
  \BibitemOpen
  \bibfield  {author} {\bibinfo {author} {\bibfnamefont {G.~W.}\ \bibnamefont
  {Ford}}, \bibinfo {author} {\bibfnamefont {J.~T.}\ \bibnamefont {Lewis}}, \
  and\ \bibinfo {author} {\bibfnamefont {R.~F.}\ \bibnamefont {O'Connell}},\
  }\href {\doibase 10.1103/PhysRevA.37.4419} {\bibfield  {journal} {\bibinfo
  {journal} {Phys. Rev. A}\ }\textbf {\bibinfo {volume} {37}},\ \bibinfo
  {pages} {4419} (\bibinfo {year} {1988})}\BibitemShut {NoStop}%
\bibitem [{\citenamefont {Calzetta}\ \emph {et~al.}(2003)\citenamefont
  {Calzetta}, \citenamefont {Roura},\ and\ \citenamefont {Verdaguer}}]{CRV03}%
  \BibitemOpen
  \bibfield  {author} {\bibinfo {author} {\bibfnamefont {E.}~\bibnamefont
  {Calzetta}}, \bibinfo {author} {\bibfnamefont {A.}~\bibnamefont {Roura}}, \
  and\ \bibinfo {author} {\bibfnamefont {E.}~\bibnamefont {Verdaguer}},\ }\href
  {\doibase DOI:10.1016/S0378-4371(02)01521-2} {\bibfield  {journal} {\bibinfo
  {journal} {Physica A}\ }\textbf {\bibinfo {volume} {319}},\ \bibinfo {pages}
  {188 } (\bibinfo {year} {2003})}\BibitemShut {NoStop}%
\bibitem [{\citenamefont {Kampen}\ and\ \citenamefont
  {Oppenheim}(1997)}]{Kampen97}%
  \BibitemOpen
  \bibfield  {author} {\bibinfo {author} {\bibfnamefont {N.}~\bibnamefont
  {Kampen}}\ and\ \bibinfo {author} {\bibfnamefont {I.}~\bibnamefont
  {Oppenheim}},\ }\href {\doibase 10.1007/BF02181287} {\bibfield  {journal}
  {\bibinfo  {journal} {J. Stat. Phys.}\ }\textbf {\bibinfo {volume} {87}},\
  \bibinfo {pages} {1325} (\bibinfo {year} {1997})}\BibitemShut {NoStop}%
\bibitem [{\citenamefont {Breuer}\ \emph {et~al.}(2003)\citenamefont {Breuer},
  \citenamefont {Ma},\ and\ \citenamefont {Petruccione}}]{Breuer03}%
  \BibitemOpen
  \bibfield  {author} {\bibinfo {author} {\bibfnamefont {H.~P.}\ \bibnamefont
  {Breuer}}, \bibinfo {author} {\bibfnamefont {A.}~\bibnamefont {Ma}}, \ and\
  \bibinfo {author} {\bibfnamefont {F.}~\bibnamefont {Petruccione}},\ }in\
  \href@noop {} {\emph {\bibinfo {booktitle} {Quantum Computing and Quantum
  Bits in Mesoscopic Systems}}},\ \bibinfo {editor} {edited by\ \bibinfo
  {editor} {\bibfnamefont {A.~J.}\ \bibnamefont {Leggett}}, \bibinfo {editor}
  {\bibfnamefont {B.}~\bibnamefont {Ruggiero}}, \ and\ \bibinfo {editor}
  {\bibfnamefont {P.}~\bibnamefont {Silvestrini}}}\ (\bibinfo  {publisher}
  {Kluwer},\ \bibinfo {address} {Dordrecht},\ \bibinfo {year} {2003})\ \Eprint
  {http://arxiv.org/abs/quant-ph/0209153} {quant-ph/0209153} \BibitemShut
  {NoStop}%
\bibitem [{\citenamefont {Strunz}\ and\ \citenamefont {Yu}(2004)}]{Strunz04}%
  \BibitemOpen
  \bibfield  {author} {\bibinfo {author} {\bibfnamefont {W.~T.}\ \bibnamefont
  {Strunz}}\ and\ \bibinfo {author} {\bibfnamefont {T.}~\bibnamefont {Yu}},\
  }\href {\doibase 10.1103/PhysRevA.69.052115} {\bibfield  {journal} {\bibinfo
  {journal} {Phys. Rev. A}\ }\textbf {\bibinfo {volume} {69}},\ \bibinfo
  {pages} {052115} (\bibinfo {year} {2004})}\BibitemShut {NoStop}%
\bibitem [{\citenamefont {Fleming}\ and\ \citenamefont
  {Cummings}(2011)}]{Accuracy}%
  \BibitemOpen
  \bibfield  {author} {\bibinfo {author} {\bibfnamefont {C.~H.}\ \bibnamefont
  {Fleming}}\ and\ \bibinfo {author} {\bibfnamefont {N.~I.}\ \bibnamefont
  {Cummings}},\ }\href {\doibase 10.1103/PhysRevE.83.031117} {\bibfield
  {journal} {\bibinfo  {journal} {Phys. Rev. E}\ }\textbf {\bibinfo {volume}
  {83}},\ \bibinfo {pages} {031117} (\bibinfo {year} {2011})}\BibitemShut
  {NoStop}%
\bibitem [{\citenamefont {Fleming}\ \emph
  {et~al.}(2010{\natexlab{b}})\citenamefont {Fleming}, \citenamefont
  {Cummings}, \citenamefont {Anastopoulos},\ and\ \citenamefont {Hu}}]{Dipole}%
  \BibitemOpen
  \bibfield  {author} {\bibinfo {author} {\bibfnamefont {C.~H.}\ \bibnamefont
  {Fleming}}, \bibinfo {author} {\bibfnamefont {N.~I.}\ \bibnamefont
  {Cummings}}, \bibinfo {author} {\bibfnamefont {C.}~\bibnamefont
  {Anastopoulos}}, \ and\ \bibinfo {author} {\bibfnamefont {B.~L.}\
  \bibnamefont {Hu}},\ }\href@noop {} {\enquote {\bibinfo {title}
  {Non-{M}arkovian dynamics and entanglement of two-level atoms in a common
  field},}\ } (\bibinfo {year} {2010}{\natexlab{b}}),\ \bibinfo {note}
  {accepted for publication: J. Phys. A},\ \Eprint
  {http://arxiv.org/abs/1012.5067} {arXiv:1012.5067 [quant-ph]} \BibitemShut
  {NoStop}%
\bibitem [{\citenamefont {Hilt}\ \emph {et~al.}(2011)\citenamefont {Hilt},
  \citenamefont {Thomas},\ and\ \citenamefont {Lutz}}]{Hilt2011}%
  \BibitemOpen
  \bibfield  {author} {\bibinfo {author} {\bibfnamefont {S.}~\bibnamefont
  {Hilt}}, \bibinfo {author} {\bibfnamefont {B.}~\bibnamefont {Thomas}}, \ and\
  \bibinfo {author} {\bibfnamefont {E.}~\bibnamefont {Lutz}},\ }\href {\doibase
  10.1103/PhysRevE.84.031110} {\bibfield  {journal} {\bibinfo  {journal} {Phys.
  Rev. E}\ }\textbf {\bibinfo {volume} {84}},\ \bibinfo {pages} {031110}
  (\bibinfo {year} {2011})}\BibitemShut {NoStop}%
\bibitem [{\citenamefont {Caldeira}\ and\ \citenamefont
  {Leggett}(1981)}]{CaldeiraLeggett81}%
  \BibitemOpen
  \bibfield  {author} {\bibinfo {author} {\bibfnamefont {A.~O.}\ \bibnamefont
  {Caldeira}}\ and\ \bibinfo {author} {\bibfnamefont {A.~J.}\ \bibnamefont
  {Leggett}},\ }\href {\doibase 10.1103/PhysRevLett.46.211} {\bibfield
  {journal} {\bibinfo  {journal} {Phys. Rev. Lett.}\ }\textbf {\bibinfo
  {volume} {46}},\ \bibinfo {pages} {211} (\bibinfo {year} {1981})}\BibitemShut
  {NoStop}%
\bibitem [{\citenamefont {Hillery}\ \emph {et~al.}(1984)\citenamefont
  {Hillery}, \citenamefont {O'Connell}, \citenamefont {Scully},\ and\
  \citenamefont {Wigner}}]{Hillery84}%
  \BibitemOpen
  \bibfield  {author} {\bibinfo {author} {\bibfnamefont {M.}~\bibnamefont
  {Hillery}}, \bibinfo {author} {\bibfnamefont {R.~F.}\ \bibnamefont
  {O'Connell}}, \bibinfo {author} {\bibfnamefont {M.~O.}\ \bibnamefont
  {Scully}}, \ and\ \bibinfo {author} {\bibfnamefont {E.~P.}\ \bibnamefont
  {Wigner}},\ }\href {\doibase DOI:10.1016/0370-1573(84)90160-1} {\bibfield
  {journal} {\bibinfo  {journal} {Phys. Rep.}\ }\textbf {\bibinfo {volume}
  {106}},\ \bibinfo {pages} {121 } (\bibinfo {year} {1984})}\BibitemShut
  {NoStop}%
\bibitem [{\citenamefont {Kirkwood}(1935)}]{Kirkwood35}%
  \BibitemOpen
  \bibfield  {author} {\bibinfo {author} {\bibfnamefont {J.~G.}\ \bibnamefont
  {Kirkwood}},\ }\href@noop {} {\bibfield  {journal} {\bibinfo  {journal} {J.
  Chem. Phys.}\ }\textbf {\bibinfo {volume} {3}},\ \bibinfo {pages} {300}
  (\bibinfo {year} {1935})}\BibitemShut {NoStop}%
\bibitem [{\citenamefont {Dwyer}(1967)}]{Dwyer67}%
  \BibitemOpen
  \bibfield  {author} {\bibinfo {author} {\bibfnamefont {P.~S.}\ \bibnamefont
  {Dwyer}},\ }\href@noop {} {\bibfield  {journal} {\bibinfo  {journal} {Journal
  of the American Statistical Assosiation}\ }\textbf {\bibinfo {volume} {62}},\
  \bibinfo {pages} {607} (\bibinfo {year} {1967})}\BibitemShut {NoStop}%
\end{thebibliography}%
\bibliographystyle{apsrev4-1}

\end{document}